%% file: draft8.tex
\documentclass[%
 aip,
 amsmath,amssymb,
reprint, onecolumn 
]{revtex4-1}
\usepackage[caption=false]{subfig}

\usepackage{graphicx}
\usepackage{dcolumn}
\usepackage{bm}

\usepackage[utf8]{inputenc}
\usepackage[T1]{fontenc}
\usepackage{mathptmx}
\usepackage{etoolbox}
\usepackage{amsthm}
\usepackage{physics}
\usepackage{color}
\usepackage{enumitem}
\usepackage{float}
\usepackage{bbm}
\usepackage{pgf}

\graphicspath{ {./figures/} }

\theoremstyle{plain}

\newtheorem{definition}{Definition}
\newtheorem{fact}{Fact}
\newtheorem{remark}{Remark}
\newtheorem{theorem}{Theorem}
\newtheorem{lemma}{Lemma}
\newtheorem{corollary}[theorem]{Corollary}

\DeclareMathOperator{\sign}{sign}

\def\id{\mathrm{id}}
\def\d{\text{d}}
\newcommand{\dist}[1]{\mathrm{dist}(#1)}
\newcommand{\unorm}[1]{\lvert\lvert\lvert #1 \rvert\rvert\rvert}

\makeatletter
\def\@email#1#2{%
 \endgroup
 \patchcmd{\titleblock@produce}
  {\frontmatter@RRAPformat}
  {\frontmatter@RRAPformat{\produce@RRAP{*#1\href{mailto:#2}{#2}}}\frontmatter@RRAPformat}
  {}{}
}%
\makeatother
\begin{document}

\preprint{AIP/123-QED}

\title[Time-optimal quantum dynamics under fixed-norm generators]{Time-optimal quantum dynamics under fixed-norm generators}

\author{Oskar A. Prośniak}%
\email{oskar.prosniak@uni.lu}
\affiliation{Department of Physics and Materials Science,
University of Luxembourg,\\
Luxembourg, G.~D.~Luxembourg}

\author{Karol Życzkowski}
\affiliation{Institute of Theoretical Physics, Jagiellonian University, Kraków, Poland
}
\affiliation{Center for Theoretical Physics, Polish Academy of Sciences, Warsaw, Poland}

\date{October 30, 2026}

\begin{abstract}
We investigate the dynamics of quantum systems whose evolution is governed by linear generators
of fixed unitarily invariant norm. For this broad class of evolutions, we derive a universal time-optimal generator of the dynamics converting one given state into another. It provides a time-optimal non-Hermitian Hamiltonian for pure state conversion. We then use this general result to derive quantum speed limits for mixed-state evolutions governed by Lindbladians and non-Hermitian Hamiltonians.
Our findings follow from the geometric properties of the group $GL(n,\mathbb{C})$, in particular the geodesic distance. The derived bounds hold for an arbitrary system size $n$, depend only on the norms of the evolution generator and the initial and final states, and do not require performing any time-averaging.
\end{abstract}

\maketitle
\section{Introduction}
The task of converting one quantum state into another in the shortest possible time lies at the very heart of now rapidly developing quantum technologies. Fast state transformation plays an important role in quantum computing\cite{suzuki_performance_2020}, quantum communication\cite{murphy_communication_2010, deffner_quantum_2020}, quantum state preparation\cite{bukov_geometric_2019}, quantum complexity\cite{schulte-herbruggen_optimal_2005}, and thermometry\cite{campbell_precision_2018}. From the perspective of potential applications it is especially important to characterise time-optimal evolutions in open systems, which not only model the unavoidable influence of the environment, but may also allow for speed-ups with respect to closed system protocols.

Two complementary approaches to this problem have emerged in the literature. First are the so called quantum speed limits (QSLs), which are lower bounds on time required for the state conversion under a particular class of evolutions\cite{deffner_quantum_2017}. Their earliest examples are the Mandelstam-Tamm time-energy uncertainty relation\cite{mandelstam_uncertainty_1991} and the Margolus-Levitin bound\cite{margolus_maximum_1998}. They both concern unitary evolution of pure states and were later confirmed experimentally\cite{ness_observing_2021,lam_demonstration_2021, wu_testing_2024}, generalized to time-dependent Hamiltonians\cite{poggi_quantum_2013, hornedal_closed_2023}, and to unitary dynamics of mixed states\cite{giovannetti_quantum_2003, campaioli_tightening_2018}. Further extensions to open system dynamics followed and completely new bounds were developed. These utilise different notions of distinguishability of quantum states (metrics), such as quantum Fisher information\cite{taddei_quantum_2013}, relative purity\cite{delcampo_quantum_2013}, alternative fidelity\cite{mai_tight_2023}, trace distance\cite{nakajima_speed_2022, funo_speed_2019}, Hilbert-Schmidt distance\cite{campaioli_tight_2019}, and quantum entropy\cite{pires_unified_2022}.

\begin{figure}[b]
    \centering
    \includegraphics[width=0.45\linewidth]{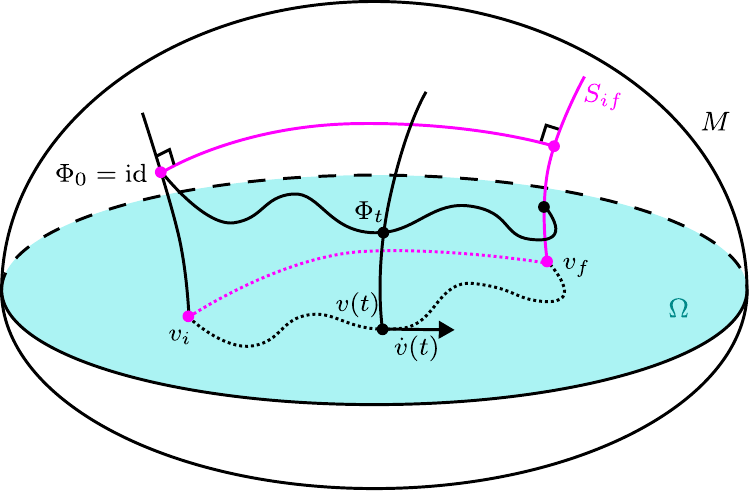}
    \caption{An illustration of the bundle construction utilised in the main lemma. Here, $v_i, v_f\in \Omega$ stand for the initial and final states of the dynamics. These can be either pure (unnormalised) quantum states, in which case the base manifold $\Omega$ is simply $\mathbb{C}^n\setminus\{0\}$, or mixed - then $\Omega = \mathbb{C}^{n\times n}\setminus\{0\}$. We take the total space $M$ to be the space of all automorphisms $\Phi$ of the underlying base manifold. In the case of pure states, they correspond to maps generated by non-Hermitian Hamiltonians, in the case of mixed states they include, but are not restricted to, Lindblad and non-Hermitian Hamiltonian generated maps. We construct fibres to consist of maps sending the corresponding base point to $v_i$ and otherwise unrestricted. In particular, the fibre $S_{if}$ sends $v_i$ to $v_f$.
    The distance between $v_i$ and $v_f$ equals the length of the shortest geodesic between $\Phi_0=\mathrm{id}$ and $S_{if}$, which must be orthogonal to all the fibres. For a natural choice of metric on $M$ the speed $\norm{\dot{v}(t)}$ of any path $t \mapsto v(t)$ is locally upper-bounded by any unitarily invariant norm of the corresponding generator, on which we impose restrictions.
    }
    \label{fig: cartoon}
\end{figure}

The second approach is the quantum time-optimal control, including quantum brachistochrone problems\cite{carlini_timeoptimal_2006, carlini_time_2008, caneva_optimal_2009, wang_quantum_2015, lin_timeoptimal_2020, garcia_highly_2022, koike_quantum_2022}. It aims at finding the exact form of the fastest evolution within the class of considered dynamics. Typically, the admissible generators depend on a set of external control fields subject to amplitude constraints. The setting of the problem can then exactly mirror the experimental reality, at the price of highly specialised results. Analysis of such problems often utilises variational approaches similar to the one known from the classical brachistochrone problem. While these methods are well suited for numerical optimisation, exact analytical results are scarce and limited to the simplest cases only\cite{ansel_introduction_2024}. The works concerning the non-Hermitian quantum systems with bounded spectral gap are here a notable exception\cite{bender_faster_2007,nesterov_optimum_2009}.

Both these approaches have their internal limitations. QSLs, while identifying the fundamental constraints of quantum evolution, usually do not characterise the dynamics saturating them. On the other hand, the results in time-optimal control are typically tailored to very specific systems and are hence non-universal. 
One of the promising ways of bridging these two perspectives is to utilise the geometrical techniques that already provided ground for a clear interpretation and development of several QSLs\cite{pires_generalized_2016}.

Such geometric approaches typically consist of endowing the space of accessible quantum states with a differential geometric structure that provides compatible measures of speed and distance. Using triangle inequality one then formulates a QSL as a ratio between the distance and the average speed of evolution between the two states of interest. If the average evolution speed is upper bounded in the (possibly further restricted) class of evolutions of interest, this procedure provides an operationally meaningful bound. 
However, if the space of accessible quantum states does not admit a differentiable manifold structure, the direct application of described procedure becomes cumbersome if not impossible. In particular, the set of all (mixed) states of a given dimension generally does not form a differentiable manifold\cite{dandrea_pseudomanifold_2021}, while it is accessible by a (time-dependent, possibly unbounded) Lindblad dynamics\cite{lindblad_generators_1976,gorini_completely_1976,chruscinski_brief_2017,cai_how_2025a}. Additionally, most geometric QSLs are often not straightforward to apply in practice, because of the incompatibility of the experimental limitations and the notion of speed used. It is especially difficult when the instantaneous speed is state dependent, which is most often the case.

Our main result for the time-optimal dynamics governed by generators of fixed unitarily-invariant norm (e.g.\ operator, trace or Hilbert-Schmidt) avoids the aforementioned problems. The norm restriction provides a viable experimental `knob', while additionally arising naturally from the geometric structure of the space of invertible linear maps. We also prove two weaker statements, in the form of QSLs, for two sub-classes of such dynamics, namely Lindblad and non-Hermitian evolutions of mixed states. In these cases, we are able to apply the geometrical reasoning, at the price of extending the space to non-physical maps and states, and in general rendering our QSLs not tight.

\section{Setting and results}
\subsection{Setting and definitions}
At any $n$-dimensional Hilbert space $\mathcal{H}$ we define:
\begin{definition}[Unitarily invariant norm] 
    Let $\mathcal{B}(\mathcal{H})$ ($\mathcal{T}(\mathcal{H}):=\mathcal{B}(\mathcal{B}(\mathcal{H}))$) be a space of all bounded (super-)operators. We define the unitarily invariant norm $\unorm{\bullet}: \mathcal{B}(\mathcal{H}) \to \mathbb{R}_{\geq 0}$ ($\unorm{\bullet}: \mathcal{T}(\mathcal{H}) \to \mathbb{R}_{\geq 0}$) to be a norm satisfying
    \begin{enumerate}
        \item $\unorm{U \bullet V} = \unorm{\bullet}$, for all unitary (super-) operators $U,V\in \mathcal{B}(\mathcal{H})$ ($U,V\in \mathcal{T}(\mathcal{H})$),
        \item $\unorm{P} = 1$, for all rank-$1$ partial isometries $P\in\mathcal{B}(\mathcal{H})$ ($P\in \mathcal{T}(\mathcal{H})$).
    \end{enumerate}
\end{definition}

As a particular example of unitarily invariant norms we consider the so called Schatten $p$-norms.

\begin{definition}[Schatten $p$-norm]
    Let $\lbrace E_i\rbrace_{i=1}^{n^2}$ be an orthonormal (wrt the Hilbert-Schmidt inner product) basis of $\mathcal{B}(\mathcal{H})$, and let $p\in [1,\infty]$. For any super-operator $T\in \mathcal{T}(\mathcal{H})$ by its matrix representation we mean $T_{ij}:=\Tr[E_i^\dagger T(E_j)]$. The Schatten $p$-norm of $T$ is then defined as
    \begin{equation}
        \norm{T}_p := \left(\Tr[\abs{(T_{ij})_{i,j=1}^{n^2}}^p]\right)^{1/p}.
    \end{equation}
    Analogously, for any operator $O\in \mathcal{B}(\mathcal{H})$, its Schatten $p$-norm is defined as
    \begin{equation}
        \norm{O}_p := \left(\Tr[\abs{O}^p]\right)^{1/p}.
    \end{equation}
\end{definition}

Note that we do not mean here a norm induced by a Schatten norm on the space of operators, but rather the Schatten norm of a super-operator matrix. The Schatten $p$-norms include the familiar trace norm $\norm{\bullet}_*:=\norm{\bullet}_1$, the Hilbert-Schmidt norm $\norm{\bullet}_{HS}:=\norm{\bullet}_2$, and the operator norm $\norm{\bullet}_{op}:=\norm{\bullet}_\infty$. Importantly, the operator norm is the smallest of the unitarily invariant norms in the following sense.

 \begin{fact}\label{fact: operator norm}
     For any (super-)operator $O\in \mathcal{B}(\mathcal{H})$ ($T\in\mathcal{T}(\mathcal{H})$) and any unitarily invariant norm $\unorm{\bullet}$ it holds that\cite{bhatia_matrix_1997}
 \begin{equation}
     \norm{O}_{op} \leq \unorm{O} \quad (\norm{T}_{op} \leq \unorm{T}).
 \end{equation}
 \end{fact}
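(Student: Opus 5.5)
The approach is to reduce both cases to the standard characterisation of unitarily invariant norms through singular values, and then to use the normalisation condition on rank-one partial isometries. The super-operator case is identical to the operator case once $T$ is identified with its matrix $(T_{ij})$ acting on the $n^2$-dimensional Hilbert space $\mathcal{B}(\mathcal{H})$ equipped with the Hilbert--Schmidt inner product. Unitary super-operators then correspond to unitary $n^2\times n^2$ matrices, and rank-one partial isometries correspond to rank-one partial isometries. It therefore suffices to prove $\norm{O}_{op}\le \unorm{O}$ for an $N\times N$ matrix $O$, with $N=n$ or $N=n^2$.

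\textbf{Step 1 (reduction to the diagonal).} Take a singular value decomposition $O = U\Sigma V$ with $U,V$ unitary and $\Sigma=\mathrm{diag}(s_1,\dots,s_N)$, where $s_1=\norm{O}_{op}\ge s_2\ge\dots\ge 0$. Property 1 of the definition gives $\unorm{O}=\unorm{\Sigma}$.

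\textbf{Step 2 (isolating the top singular value).} Let $D_k$ be the diagonal unitary that flips the sign of the $k$-th diagonal entry, and set $D=D_2D_3\cdots D_N$. Then $\Sigma + D\Sigma = 2 s_1 P_1$, where $P_1=\ketbra{1}{1}$ is a rank-one partial isometry. Property 1 gives $\unorm{D\Sigma}=\unorm{\Sigma}$. The triangle inequality then yields
\begin{equation}
2s_1\unorm{P_1}\le \unorm{\Sigma}+\unorm{D\Sigma}=2\unorm{\Sigma}.
\end{equation}
Property 2 gives $\unorm{P_1}=1$, so $s_1\le\unorm{\Sigma}=\unorm{O}$, which is the claim.

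The whole argument is elementary, and the only point needing care is the sign-flip identity in Step 2. Its role is to make the norm monotone under compressing $\Sigma$ onto its top singular direction, and this uses only unitary invariance and the triangle inequality. One must also check that in the super-operator setting ``unitary super-operators'' is read as unitary with respect to the Hilbert--Schmidt inner product, so that $D$ and the singular value decomposition factors are admissible. Under that reading the super-operator case needs no separate treatment. Alternatively one can simply cite the standard result of Bhatia, as the paper does.
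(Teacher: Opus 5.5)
Your proof is correct, but it takes a different route from the paper. The paper gives no argument for this Fact; it simply cites Bhatia.

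In the cited reference, the standard route is von Neumann's correspondence between unitarily invariant norms and symmetric gauge functions $\Phi$ of the singular values. That is combined with the monotonicity of gauge functions, which gives $\max_j\lvert x_j\rvert\le\Phi(x)$ once $\Phi$ is normalised on $(1,0,\dots,0)$. Your sign-flip identity $\Sigma+D\Sigma=2s_1P_1$ is the matrix-level form of the convexity argument behind that monotonicity. You carry it out directly from the two axioms of the paper's definition. This bypasses the gauge-function machinery entirely and uses the paper's normalisation on rank-one partial isometries exactly as stated. The cited framework buys more in return, for example the companion bound $\unorm{O}\le\norm{O}_1$ and Ky Fan dominance, but none of that is needed here.

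Your caveat about the super-operator case is well placed. The argument requires $D$ and the singular-value-decomposition factors to be admissible in property 1. So ``unitary super-operator'' must mean an arbitrary unitary $n^2\times n^2$ matrix with respect to the Hilbert--Schmidt inner product, not merely a conjugation $X\mapsto WXW^\dagger$. That is the reading consistent with the paper's definition of $\norm{T}_p$ through the matrix $(T_{ij})$. Under it, $\norm{T}_{op}$ is the largest singular value of $(T_{ij})$, and your reduction to $N=n^2$ goes through unchanged.
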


\begin{definition}[Quantum states]
    We define the sets of
    \begin{enumerate}[label = \alph*)]
        \item un-normalised quantum states
            \begin{equation}
            \widetilde{\mathcal{D}}(\mathcal{H}) := \left\lbrace \rho \in \mathcal{B}(\mathcal{H}) | \rho = \rho^\dagger, \rho \geq 0,\rho \not\equiv 0\right\rbrace,
            \end{equation}
        \item normalised quantum states
            \begin{equation}
            \mathcal{D}(\mathcal{H}) := \left\lbrace \rho \in \mathcal{B}(\mathcal{H}) | \rho=\rho^\dagger, \rho \geq 0, \Tr[\rho]=1\right\rbrace.
            \end{equation}
    \end{enumerate}
    
\end{definition}

\begin{definition}[Lindbladian]
    Let $\{ F_i \}_{i=1}^{n^2-1}$ be an orthonormal set of traceless operators on $\mathcal{H}$ wrt the inner product induced by the Hilbert-Schmidt norm. For any choice of bounded, Hermitian Hamiltonian $H\in \mathcal{B}(\mathcal{H})$ and the Kossakowski matrix $K \in \mathbb{C}^{(n^2-1)\times (n^2-1)}$, $K\geq 0$, we call the following superoperator a Lindblad generator\cite{lindblad_generators_1976,gorini_completely_1976}
    \begin{equation}
        L: \mathcal{B}(\mathcal{H}) \to \mathcal{B}(\mathcal{H}),\quad L(\rho) := -i[H,\rho] + \sum_{i,j}^{n^2-1} K_{ij} \left[ F_i \rho F_j^\dagger - \frac{1}{2}\left( F_j^\dagger F_i \rho +\rho F_j^\dagger F_i\right)\right].
    \end{equation}
\end{definition}

In this work, we are interested in time-optimal quantum evolutions governed by linear generators, such as Lindblad generators or non-Hermitian Hamiltonians, with a fixed unitarily-invariant norm. The following section states this problem more rigorously.

\subsection{Main results}
The main result of this work is the following general theorem. We use here a standard convention $\inf \emptyset = \infty$ and natural units $\hbar = 1$. The norm induced by the standard inner product on the Hilbert space will be denoted by $\norm{\bullet}$ without subscripts.

\begin{theorem}[Time-optimal evolution of pure states]\label{thm: speed limit pure}
     Let $\mathcal{H}$ be an $n$-dimensional Hilbert space, $\unorm{\bullet}: \mathcal{B}(\mathcal{H}) \to \mathbb{R}_{\geq 0}$ a unitarily invariant norm, and let $\ket{\psi_i},\ket*{\psi_f} \in \mathcal{H}$ be two non-zero pure quantum states (not necessarily normalised). Define
    \begin{align}
        \mathfrak{G}(\mathcal{H}) &:= \left\lbrace G \in \mathrm{L}^1\!\left([0,\infty),\mathcal{B}(\mathcal{H})\right)|\, \forall t\geq 0:\,\unorm{G(t)}=1\right\rbrace,\\
        \tau_{if}^\text{pure} &:= \inf \left\{t\geq 0 \,| \,  \exists G\in \mathfrak{G}(\mathcal{H}): \mathbf{T}e^{-i\int_0^t G(s)\,\d s}\ket{\psi_i} = \ket*{\psi_f}\right\},
    \end{align}
    where $\mathbf{T}$ stands for the time ordering operator. Denote $\lambda:=\log\left(\frac{\norm*{\psi_f}}{\norm{\psi_i}}\right)$, $\theta:=\arccos\left(\frac{\Re \braket{\psi_i}{\psi_f}}{\norm{\psi_i}\norm*{\psi_f}}\right)$. The shortest time of pure state evolution generated by a normalised non-Hermitian Hamiltonian equals then 
    \begin{equation}\label{eq: limit pure}
         \tau_{if}^\text{pure} = \sqrt{ \lambda^2 + \theta^2 },
    \end{equation}
    and is attained by
    \begin{equation}
        G^*(t) := i \frac{\ketbra{\dot{\psi}(t)}{\psi(t)}}{\braket{\psi(t)}},\quad \ket{\psi(t)}:=
        \begin{cases}
            \norm{\psi_i} e^{t \lambda /\sqrt{\lambda^2+\theta^2}}\left( \frac{\sin(\theta-\frac{t\theta}{\sqrt{\lambda^2+\theta^2}})}{\sin(\theta)} \frac{\ket{\psi_i}}{\norm{\psi_i}} + \frac{\sin(\frac{t \theta}{\sqrt{\lambda^2+\theta^2}})}{\sin(\theta)} \frac{\ket{\psi_f}}{\norm{\psi_f}} \right), &\theta\in (0,\pi),\\
            \norm{\psi_i}e^{t \lambda /\sqrt{\lambda^2+\theta^2}}\left( \cos(\frac{t \theta}{\sqrt{\lambda^2+\theta^2}}) \frac{\ket{\psi_i}}{\norm{\psi_i}} + \sin(\frac{t \theta}{\sqrt{\lambda^2+\theta^2}}) \ket{\psi_i^\perp} \right), &\theta = \pi, \\
            e^{t \lambda /\sqrt{\lambda^2+\theta^2}} \ket{\psi_i}, &\theta=0,\\
        \end{cases}
    \end{equation}
    where $\ket{\psi_i^\perp}$ is an arbitrary normalised state satisfying $\Re \braket{\psi_i}{\psi_i^\perp}=0$ (the solution is non-unique).
\end{theorem}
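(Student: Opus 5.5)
The plan is to read the problem geometrically. We equip $\mathcal{H}\setminus\{0\}\cong\mathbb{R}^{2n}\setminus\{0\}$ with the Riemannian metric $\norm{\d\psi}^2/\norm{\psi}^2$. We then show two things: every admissible evolution moves with speed at most $1$ in this metric, and the distance from $\ket{\psi_i}$ to $\ket*{\psi_f}$ is exactly $\sqrt{\lambda^2+\theta^2}$. The explicit path in the statement realises this distance with unit speed, and $G^*$ generates it.

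\emph{Lower bound.} Let $G\in\mathfrak{G}(\mathcal{H})$ and let $\ket{\psi(t)}$ be the absolutely continuous (Carathéodory) solution of $\ket{\dot\psi}=-iG(t)\ket{\psi}$ with $\psi(0)=\psi_i$ and $\psi(T)=\psi_f$. The solution never vanishes, since the propagator is invertible.
\begin{enumerate}
\item By Fact~\ref{fact: operator norm}, $\norm{\dot\psi}\le\norm{G(t)}_{op}\norm{\psi}\le\unorm{G(t)}\norm{\psi}=\norm{\psi}$ for almost every $t$.
\item Write $\psi=e^{s}u$ with $s=\log\norm{\psi}$ and $\norm{u}=1$. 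Since $\Re\braket{u}{\dot u}=0$, we get $\norm{\dot\psi}^2/\norm{\psi}^2=\dot s^2+\norm{\dot u}^2$. The metric is therefore the product metric on the cylinder $\mathbb{R}\times S^{2n-1}$, with the round sphere sitting in $\mathbb{R}^{2n}$.
\item Integrating and applying Minkowski's inequality for the Euclidean norm of vector-valued integrals,
\[
T\ \ge\ \int_0^T\!\sqrt{\dot s^2+\norm{\dot u}^2}\,\d t\ \ge\ \sqrt{\Big(\int_0^T\!|\dot s|\,\d t\Big)^2+\Big(\int_0^T\!\norm{\dot u}\,\d t\Big)^2}.
\]
\item The first integral is at least $|s(T)-s(0)|=|\lambda|$. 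The second is the length of a curve on the unit sphere of $\mathbb{R}^{2n}$ joining $\psi_i/\norm{\psi_i}$ to $\psi_f/\norm{\psi_f}$. It is therefore at least their great-circle angle, which is $\arccos$ of the real inner product, namely $\theta$.
\end{enumerate}
Hence $\tau^\text{pure}_{if}\ge\sqrt{\lambda^2+\theta^2}$.

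\emph{Attainment.} Take the proposed $\ket{\psi(t)}$ with $T:=\sqrt{\lambda^2+\theta^2}$. Its unit part $u(t)$ is the slerp along the great circle (real-linear in $\mathbb{R}^{2n}$), and in the case $\theta=\pi$ it is any half great circle, which is where the non-uniqueness comes from. The angular speed of $u$ is constant and equal to $\theta/T$. The logarithmic radius is $s(t)=\log\norm{\psi_i}+t\lambda/T$. One checks directly that:
\begin{itemize}
\item $\psi(0)=\psi_i$ and $\psi(T)=\psi_f$;
\item $\norm{\dot\psi}/\norm{\psi}=\sqrt{(\lambda/T)^2+(\theta/T)^2}=1$ for all $t$.
\end{itemize}
Now set $G^*(t)=i\ketbra{\dot\psi}{\psi}/\braket{\psi}$. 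Then $-iG^*\ket{\psi}=\ket{\dot\psi}$, so $G^*$ generates this path. Moreover $G^*$ is rank one, equal to $\norm{\dot\psi}\norm{\psi}/\norm{\psi}^2$ times a rank-one partial isometry. By unitary invariance and normalisation property 2 of the definition, $\unorm{G^*(t)}=\norm{\dot\psi}/\norm{\psi}=1$. For $t>T$ we extend $G^*$ by any constant norm-one operator so that $G^*\in\mathfrak{G}(\mathcal{H})$. Thus the infimum is attained and equals $\sqrt{\lambda^2+\theta^2}$.

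The step I expect to require the most care is the lower bound in the second and third items. We must justify the product decomposition of the metric for merely absolutely continuous curves, and apply Minkowski's inequality correctly so that both the radial and angular contributions are captured at once. Two further points need attention: the degenerate cases $\theta\in\{0,\pi\}$ in the attainment step, including checking that the sphere distance is measured with the real part of the inner product, and verifying that the time-ordered exponential coincides with the Carathéodory solution for $L^1$ generators.
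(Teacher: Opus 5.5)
Your proof is correct and follows essentially the paper's route. You use the same metric $\norm{\d\psi}^2/\norm{\psi}^2$ on $\mathcal{H}\setminus\{0\}$, the same speed bound $\norm{\dot\psi}/\norm{\psi}\le\norm{G}_{op}\le\unorm{G}=1$ via Fact~\ref{fact: operator norm}, the same product structure $\mathbb{R}\times S^{2n-1}$ giving distance $\sqrt{\lambda^2+\theta^2}$, and attainment by the rank-one generator whose norm is fixed by the partial-isometry normalisation. The only differences are minor: you skip the $GL(n,\mathbb{C})$ bundle part of the paper's Lemma~\ref{lemm: dist(id, S)}, which the paper's own Remark calls dispensable, and your Minkowski step proves the Pythagorean distance formula that the paper takes directly from the product form of geodesics.
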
 

Two observations are here in order. First, the result is independent of the choice of the unitarily invariant norm. Second, for $\theta=0$ the optimal non-Hermitian Hamiltonian is time independent and purely anti-Hermitian.

Two useful corollaries, which are particularly well suited for practical applications, follow from this general theorem.

\begin{corollary}[Speed limit for Lindblad dynamics]\label{cor: speed limit Lindblad}
    Let $\mathcal{H}$ be an $n$-dimensional Hilbert space, $\unorm{\bullet}: \mathcal{T}(\mathcal{H}) \to \mathbb{R}_{\geq 0}$ a unitarily invariant norm, and let $\rho_i,\rho_f \in \mathcal{D}(\mathcal{H})$ be two arbitrary quantum states. Define
    \begin{align}
        \mathfrak{L}(\mathcal{H}) &:= \left\lbrace L \in \mathcal{T}(\mathcal{H})|\, L \text{ is a Lindblad generator}, \unorm{L}=1\right\rbrace,\\
        \tau_{if}^{Lind} &:= \inf \left\{t\geq 0 \,| \,  \exists L\in \mathfrak{L}(\mathcal{H}): e^{Lt}\rho_i = \rho_f\right\}.
    \end{align}
    Then the time of evolution under a normalised Lindblad generator $L$ satisfies 
    \begin{equation}\label{eq: limit Lindblad}
         \tau_{if}^{Lind}\geq \sqrt{\log^2
         \left(\sqrt{
         \frac{\Tr\big[\rho_f^2\big]}{\Tr[\rho_i^2]}
         }\right) + \arccos^2
         \left(
         \frac{ \Tr[\rho_i \rho_f]}{\sqrt{\Tr[\rho_i^2]\Tr\big[\rho_f^2\big]}}
         \right)}.
    \end{equation}
\end{corollary}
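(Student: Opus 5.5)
The plan is to reduce Corollary~\ref{cor: speed limit Lindblad} to Theorem~\ref{thm: speed limit pure} by vectorisation. Fix an orthonormal basis $\{E_j\}_{j=1}^{n^2}$ of $\mathcal{B}(\mathcal{H})$ with respect to the Hilbert--Schmidt inner product. This identifies $\mathcal{B}(\mathcal{H})$ isometrically with an $n^2$-dimensional Hilbert space $\mathcal{K}\cong\mathbb{C}^{n^2}$ via $\rho\mapsto\kket{\rho}$, where $\kket{\rho}_j=\Tr[E_j^\dagger\rho]$. The standard norm on $\mathcal{K}$ is then $\norm{\kket{\rho}}=\sqrt{\Tr[\rho^\dagger\rho]}$, and for Hermitian $\rho_i,\rho_f$ the inner product is $\langle\!\langle\rho_i|\rho_f\rangle\!\rangle=\Tr[\rho_i\rho_f]\in\mathbb{R}$. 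A superoperator $T\in\mathcal{T}(\mathcal{H})$ becomes the operator on $\mathcal{K}$ with matrix $T_{ij}$. By definition, a unitarily invariant norm on $\mathcal{T}(\mathcal{H})$ is exactly a unitarily invariant norm on $\mathcal{B}(\mathcal{K})$: unitary superoperators are the unitaries on $\mathcal{K}$, and rank-one partial isometries correspond as well.

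Next, I embed Lindblad dynamics into the class $\mathfrak{G}(\mathcal{K})$. Given $L\in\mathfrak{L}(\mathcal{H})$ with $e^{Lt}\rho_i=\rho_f$, set $G(s):=iL$ for all $s\ge 0$, a constant and hence $\mathrm{L}^1_{\mathrm{loc}}$ generator. Then $-iG=L$, so $\unorm{G(s)}=\unorm{L}=1$, and the time-ordered exponential reduces to $e^{Lt}$, giving $e^{-i\int_0^t G}\kket{\rho_i}=\kket{\rho_f}$. There is a technicality: $\mathfrak{G}$ is formally defined with $\mathrm{L}^1([0,\infty))$ while a constant is only locally integrable. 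I will handle it by setting $G(s)=iL$ on $[0,t]$ and continuing with any admissible tail, for instance a norm-one generator decaying in a summable way is impossible, so instead I interpret the requirement as local integrability, which is clearly the intended meaning since only $[0,t]$ matters. Either way, every $t$ in the Lindblad feasibility set is in the pure-state feasibility set for the vectors $\kket{\rho_i},\kket{\rho_f}\in\mathcal{K}$. Both vectors are nonzero because $\Tr\rho=1$. Hence $\tau^{Lind}_{if}\ge\tau^{\text{pure}}_{if}$, which also covers the empty case since $\inf\emptyset=\infty$.

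Finally, I apply Theorem~\ref{thm: speed limit pure} in $\mathcal{K}$. This gives $\tau^{\text{pure}}_{if}=\sqrt{\lambda^2+\theta^2}$ with $\lambda=\log(\norm{\kket{\rho_f}}/\norm{\kket{\rho_i}})=\log\sqrt{\Tr[\rho_f^2]/\Tr[\rho_i^2]}$ and $\theta=\arccos(\Tr[\rho_i\rho_f]/\sqrt{\Tr[\rho_i^2]\Tr[\rho_f^2]})$, which is exactly the bound~\eqref{eq: limit Lindblad}.

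The main obstacle is making sure the norm identification is faithful. The theorem requires a unitarily invariant norm on $\mathcal{B}(\mathcal{K})$ normalised on rank-one partial isometries. The Definition states these properties for superoperators in terms of their matrix representation in an orthonormal basis, so they transfer verbatim. I also need that the choice of basis is irrelevant, since a change of orthonormal basis is a unitary conjugation. The remaining minor point is the factor $-i$ in the theorem's convention, and multiplying by the phase $i$ does not change a unitarily invariant norm.
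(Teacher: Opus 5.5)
Your proposal is correct and follows the paper's route. Both arguments identify $(\mathcal{B}(\mathcal{H}),\norm{\bullet}_{HS})$ isometrically with $\mathbb{C}^{n^2}$ and apply the pure-state argument in dimension $m=n^2$; you invoke Theorem~\ref{thm: speed limit pure} as a black box via the constant generator $G\equiv iL$, whereas the paper reruns the chain of inequalities, which needs only the lower-bound half and Fact~\ref{fact: operator norm}, and your reading of $\mathrm{L}^1$ as local integrability is the right one, since literally no norm-one $G$ is integrable on $[0,\infty)$.
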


The above bound is actually satisfied by a much larger class of evolution with linear time-dependent generators of fixed norm, as can be seen from Theorem~\ref{thm: speed limit pure}. We pick the Lindblad dynamics as an illustrative example. For instance, this Corollary can be easily extended to quantum evolutions governed by non-Hermitian Hamiltonians whose associated Liouvillians have fixed norm. It is however more practical to bound the norm of the Hamiltonian itself. The following Corollary gives a QSL in such a scenario.

\begin{corollary}[Speed limit for non-Hermitian dynamics]\label{cor: speed limit non-Herm}
     Let $\mathcal{H}$ be an $n$-dimensional Hilbert space, and let $\rho_i,\rho_f \in \widetilde{\mathcal{D}}(\mathcal{H})$ be two arbitrary quantum states. Let $\mathfrak{G}(\mathcal{H})$ be as in Theorem~\ref{thm: speed limit pure}, and define
    \begin{equation}
        \tau_{if}^{NH} := \inf \left\{t\geq 0 \,| \,  \exists G\in \mathfrak{G}(\mathcal{H}): \mathbf{T}e^{-i\int_0^t G(s)\,\d s}\rho_ie^{i\int_0^t G^\dagger(s)\,\d s} = \rho_f\right\}. \label{eq: tau_ij}
    \end{equation}
    Then the time of mixed state evolution under a normalised non-Hermitian Hamiltonian $G$ satisfies
    \begin{equation}
         \tau_{if}^{NH}\geq \frac{1}{2}\sqrt{\log^2
         \left(\sqrt{
         \frac{\Tr\big[\rho_f^2\big]}{\Tr[\rho_i^2]}
         }\right) + \arccos^2
         \left(
         \frac{ \Tr[\rho_i \rho_f]}{\sqrt{\Tr[\rho_i^2]\Tr\big[\rho_f^2\big]}}
         \right)}.
    \end{equation}
\end{corollary}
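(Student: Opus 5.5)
The plan is to vectorize the mixed-state evolution, so that it becomes a pure-state evolution on the doubled space, and then apply Theorem~\ref{thm: speed limit pure} there.

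\textbf{Vectorization and the effective generator.} Identify $\mathcal{B}(\mathcal{H})$ with $\mathcal{K}:=\mathbb{C}^{n^2}\cong\mathcal{H}\otimes\mathcal{H}$ through the Hilbert--Schmidt inner product, writing $\rho\mapsto\ket{\rho}\rangle$. Then $\norm{\ket{\rho}\rangle}=\sqrt{\Tr[\rho^2]}$ and $\langle\langle\rho_i|\rho_f\rangle\rangle=\Tr[\rho_i\rho_f]$ for Hermitian $\rho_i,\rho_f$. This number is real, so its real part is itself.

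Let $G\in\mathfrak{G}(\mathcal{H})$ and set $\rho(t)=A(t)\rho_iA(t)^\dagger$ with $A(t)=\mathbf{T}e^{-i\int_0^tG}$. Differentiating gives $\dot\rho=-i\,(G\rho-\rho G^\dagger)$. Hence $\ket{\rho(t)}\rangle$ solves a linear equation $\tfrac{d}{dt}\ket{\rho}\rangle=-i\,\mathcal{G}(t)\ket{\rho}\rangle$, where
\[
\mathcal{G}(t)=G(t)\otimes \mathbbm{1}-\mathbbm{1}\otimes \overline{G(t)} .
\]
Equivalently, $\ket{\rho(t)}\rangle=\mathbf{T}e^{-i\int_0^t\mathcal{G}}\ket{\rho_i}\rangle$.

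\textbf{Norm bound.} By the triangle inequality and Fact~\ref{fact: operator norm},
\[
\norm{\mathcal{G}(t)}_{op}\le\norm{G(t)}_{op}+\norm{\overline{G(t)}}_{op}\le 2\unorm{G(t)}=2 .
\]
So $\ket{\rho(t)}\rangle$ is driven by a generator of operator norm at most $2$ on $\mathcal{K}$. Integrability of $\mathcal{G}$ follows from that of $G$.

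\textbf{Reparametrization (the main technical step).} Theorem~\ref{thm: speed limit pure} demands generators with norm exactly $1$, while we only have the bound $c(t):=\norm{\mathcal{G}(t)}_{op}\le 2$. I would fix this by a time change.
\begin{itemize}[label={}]
\item Define $s(t):=\int_0^t c(u)\,\d u\le 2t$ and $\widetilde{\mathcal{G}}(s(t)):=\mathcal{G}(t)/c(t)$ wherever $c(t)>0$.
\item On the set where $c=0$ the state does not move, so collapsing those intervals loses nothing.
\item The time-ordered exponential is invariant under this absolutely continuous monotone reparametrization. This is checked through the ODE: $\tfrac{d}{ds}\ket{\rho}\rangle=-i\widetilde{\mathcal{G}}\ket{\rho}\rangle$ with $\norm{\widetilde{\mathcal{G}}}_{op}=1$.
\end{itemize}
Thus $\ket{\rho_i}\rangle$ is mapped to $\ket{\rho_f}\rangle$ by a norm-one generator in time $s(t)\le 2t$. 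The care needed is measure-theoretic: defining $\widetilde{\mathcal{G}}$ measurably, keeping it in $L^1$, and extending it by an arbitrary norm-one generator past $s(t)$. Away from that, the step is routine.

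\textbf{Conclusion.} Apply Theorem~\ref{thm: speed limit pure} on $\mathcal{K}$ with the operator norm, which is unitarily invariant and independent of the choice by the remark after the theorem. This gives $2t\ge s(t)\ge\sqrt{\lambda^2+\theta^2}$, where
\[
\lambda=\log\sqrt{\Tr[\rho_f^2]/\Tr[\rho_i^2]},\qquad \theta=\arccos\!\Bigl(\Tr[\rho_i\rho_f]\big/\sqrt{\Tr[\rho_i^2]\Tr[\rho_f^2]}\Bigr).
\]
Taking the infimum over admissible $t$ and $G$ yields the bound in Corollary~\ref{cor: speed limit non-Herm}, including the factor $\tfrac12$.
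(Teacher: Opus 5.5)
Your proof is correct, but it is organised differently from the paper's. The paper does not use Theorem~\ref{thm: speed limit pure} as a black box. It returns to Lemma~\ref{lemm: dist(id, S)} in dimension $m=n^2$, i.e.\ to the metric $g_\rho(A,B)=\Re\Tr[A^\dagger B]/\norm{\rho}_{HS}^2$ on $\mathbb{C}^{n\times n}\setminus\{0\}$. It bounds the instantaneous speed directly, $\norm{\dot\rho}_{HS}/\norm{\rho}_{HS}=\norm{G\rho-\rho G^\dagger}_{HS}/\norm{\rho}_{HS}\le 2\norm{G}_{op}\le 2$. The same length-versus-distance chain as in the proof of the theorem then gives $2t\ge\sqrt{\lambda^2+\theta^2}$. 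Bounding the length of a curve only needs a pointwise upper bound on the speed, so no time change is required there.

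Your estimate $\norm{\mathcal{G}}_{op}\le 2$ for $\mathcal{G}=G\otimes\mathbbm{1}-\mathbbm{1}\otimes\overline{G}$ is exactly this inequality in vectorised form. The price of using only the statement of the theorem, whose hypothesis is $\unorm{G(t)}=1$ exactly, is the reparametrisation $s(t)=\int_0^t c$. That step does go through. Since $\abs{s(E)}\le\int_E c\,\d u$ for measurable $E$, both null sets and $\{c=0\}$ are sent to null sets, so $\widetilde{\mathcal{G}}$ is well defined a.e.\ through the generalised inverse of $s$. The change of variables $\int_0^{s(t)}f\,\d s=\int_0^t f(s(u))\,c(u)\,\d u$ then verifies the reparametrised equation. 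Still, this is heavier than needed.

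If you want to keep the black-box structure without any measure theory, pad rather than reparametrise. On $\mathcal{K}\oplus\mathbb{C}$ take $\widetilde{\mathcal{G}}(s):=\tfrac12\bigl(\mathcal{G}(s/2)\oplus 2\bigr)$. It has operator norm exactly $1$, carries $\ket{\rho_i}\rangle\oplus 0$ to $\ket{\rho_f}\rangle\oplus 0$ at time $s=2t$, and leaves $\lambda$ and $\theta$ unchanged.

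Incidentally, a function of constant norm $1$ is never in $\mathrm{L}^1([0,\infty))$, so $\mathfrak{G}$ has to be read with locally integrable generators. This affects the paper's definition and your extension step equally.
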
 

Observe that the above QSL differs from that of Corollary~\ref{cor: speed limit Lindblad} only by a factor of $\frac{1}{2}$. This originates from the fact that we fix the norm of different objects - in the first case it is a superoperator, in the second it is an ordinary operator with an adjoint action on quantum states. Moreover, it is possible for the set in \eqref{eq: tau_ij} to be empty, e.g.\ when $\rho_i, \rho_f$ have different ranks. We then have that $\tau_{if}^{NH}$ is infinite as $\rho_f$ is not reachable via non-Hermitian dynamics from $\rho_i$.

Both Theorem~\ref{thm: speed limit pure} and Corollary~\ref{cor: speed limit non-Herm} may be compared with a tight speed limit for non-Hermitian dynamics of mixed states derived in \,\cite{hornedal_geometrical_2025}. 
While both results admit geometric formulations, the geometries arise from different principles. The present work employs a geometry adapted to the available dynamical resource, namely the norm of the generator. By contrast, the speed limit of \,\cite{hornedal_geometrical_2025} is formulated in terms of a geometry associated with state distinguishability through the Uhlmann fidelity, and is not tied to a specific constraint on the generator.

\section{Proof of Theorem~\ref{thm: speed limit pure} and Corollaries~\ref{cor: speed limit Lindblad}-\ref{cor: speed limit non-Herm}}

Proof of
Theorem~\ref{thm: speed limit pure}
presented in this Section 
is based on the following lemma.
\subsection{Lemma \ref{lemm: dist(id, S)}}

\begin{lemma}\label{lemm: dist(id, S)}
    Let $m\in\mathbb{N}_+$ and let $M:=GL(m,\mathbb{C})$ be equipped with a standard right-invariant\footnote{The choice of the left-invariant metric is completely equivalent for our result since taking an inverse provides an isometry between the two spaces.} Riemannian metric defined on $T_\Phi M \cong \mathfrak{gl}(m,\mathbb{C})$ as follows
    \begin{equation} 
        g_\Phi: T_\Phi M\times T_\Phi M \to \mathbb{R},\quad g_\Phi(X,Y):=\mathrm{Re}\,\Tr\!\left[(X\Phi^{-1})^\dagger(Y \Phi^{-1})\right],
    \end{equation}
    and denote the derived geodesic distance $\mathrm{dist}\!: M\times M \to \mathbb{R}_{\geq 0}$. For $v_i, v_f \in \Omega := \mathbb{C}^m\setminus\{0\}$, define the set of restricted maps
    \begin{equation}
        S_{if} := \left\lbrace \Phi \in M |\, \Phi v_i = v_f\right\rbrace.
    \end{equation}
    Define $\lambda := \log( \frac{\norm*{v_f}}{\norm{v_i}})$ and $\theta := \arccos(\Re \frac{v_i^\dagger v_f}{\norm{v_i}\norm*{v_f}})$. Then
    \begin{equation}
        \dist{\id,S_{if}} = \sqrt{\lambda^2 + \theta^2},
    \end{equation}
    and this distance is realised by a path $\Phi:[0,1]\to M$ satisfying
    \begin{equation}
        \dot{\Phi}(t) = G(t)\Phi(t),
    \end{equation}
    where
    \begin{equation}
        G(t):= \frac{\dot{v}(t) v^\dagger (t)}{\norm{v(t)}^2}, \quad v(t) :=
        \begin{cases}
            \norm{v_i}e^{\lambda t}\left( \frac{\sin((1-t)\theta)}{\sin(\theta)} \frac{v_i}{\norm{v_i}} + \frac{\sin(t\theta)}{\sin(\theta)} \frac{v_f}{\norm*{v_f}}\right), &\theta\in (0,\pi),\\
            \norm{v_i}e^{\lambda t}\left( \cos(\theta t) \frac{v_i}{\norm{v_i}} + \sin(\theta t) u_i^\perp \right), &\theta = \pi, \\
            e^{\lambda t} v_i, &\theta=0,\\
        \end{cases}
    \end{equation}
    and $u_i^\perp$ is any normalised vector satisfying $\Re u_i^\dagger u_i^\perp=0$ (the solution is non-unique).
    
    Moreover, $g_\Phi$ induces a metric on $T\Omega$:
    \begin{equation}
        g_v: T\Omega\times T\Omega\to \mathbb{R}, \quad g_v(A,B):=\frac{\Re(A^\dagger B)}{\norm{v}^2},
    \end{equation}
    whose geodesic distance $\mathrm{dist}_\Omega:\Omega\times\Omega\to \mathbb{R}_{\geq 0}$ satisfies
    \begin{equation}
        \mathrm{dist}_\Omega(v_i,v_f)=\dist{\id, S_{if}}.\label{dist_pytagor}
    \end{equation}
    
\end{lemma}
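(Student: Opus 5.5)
The plan is to reduce the problem on $M=GL(m,\mathbb{C})$ to a problem on the orbit $\Omega$ by a Riemannian-submersion-type argument, and then solve the reduced problem explicitly. Consider the orbit map $\pi: M\to\Omega$, $\pi(\Phi):=\Phi^{-1}v_i$. This matches the fibres of Fig.~\ref{fig: cartoon}: $\pi^{-1}(v)$ consists of the maps sending $v$ to $v_i$. Equivalently, we may use $\Phi\mapsto \Phi v_i$ together with the isometry $\Phi\mapsto\Phi^{-1}$ from the footnote. With the right-invariant metric, a curve $\Phi(t)$ with $\dot\Phi=G\Phi$ has speed $\|G\|_{HS}$, and its image $v(t)=\Phi(t)v_i$ satisfies $\dot v=Gv$.

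\textbf{Step 1 (pointwise speed bound).} For fixed $v\neq0$ and fixed $\dot v$, minimise $\|G\|_{HS}^2$ subject to $Gv=\dot v$. Decomposing $G$ as $G P_v+G(1-P_v)$ with $P_v=vv^\dagger/\|v\|^2$ gives $\|G\|_{HS}^2\ge\|GP_v\|_{HS}^2=\|\dot v\|^2/\|v\|^2$. Equality holds exactly for $G=\dot v v^\dagger/\|v\|^2$, which is the horizontal lift. Consequently $\mathrm{length}_M(\Phi)\ge\mathrm{length}_{g_v}(v)$ for every curve. Conversely, any curve $v(t)$ in $\Omega$ lifts by solving $\dot\Phi=G(t)\Phi$, $\Phi(0)=\id$, and the lift has equal length. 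A curve from $\id$ to $S_{if}$ corresponds exactly to a curve in $\Omega$ from $v_i$ to $v_f$. Hence $\dist{\id,S_{if}}=\mathrm{dist}_\Omega(v_i,v_f)$, which is \eqref{dist_pytagor}, and $g_v$ is the induced metric.

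\textbf{Step 2 (compute $\mathrm{dist}_\Omega$).} Identify $\mathbb{C}^m\cong\mathbb{R}^{2m}$ with $\Re(A^\dagger B)$ as the Euclidean inner product. Write $v=e^{r}u$ with $r=\log\|v\|$ and $u\in S^{2m-1}$. Then $g_v(\dot v,\dot v)=\dot r^2+\|\dot u\|^2$, so $(\Omega,g_v)$ is isometric to the product $\mathbb{R}\times S^{2m-1}$ with the round unit sphere. The distance is therefore $\sqrt{\Delta r^2+d_{S}(u_i,u_f)^2}$, where $\Delta r=\lambda$ and the great-circle distance is $\arccos\Re(u_i^\dagger u_f)=\theta$. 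Geodesics in a product are products of geodesics parametrised proportionally to arc length: $r$ is linear, $r(t)=\log\|v_i\|+\lambda t$, and $u$ is the slerp between $u_i$ and $u_f$. This gives exactly the stated $v(t)$ for $\theta\in(0,\pi)$. When $\theta=0$ the sphere part is constant. When $\theta=\pi$ the antipodal points are joined by any great semicircle through a unit vector $u_i^\perp$ that is real-orthogonal to $u_i$; this is the source of the non-uniqueness. Lifting via Step 1 with $G(t)=\dot v v^\dagger/\|v\|^2$ realises the distance.

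\textbf{Main obstacle.} The delicate point is Step 1's converse direction, namely that the infimum of lengths over the whole fibre set $S_{if}$ is attained and equals the base distance. This requires that horizontal lifts exist globally. That holds because the ODE $\dot\Phi=G\Phi$ is linear with continuous $G$, so its solutions stay invertible for all $t\in[0,1]$. A further check is that $\dot v v^\dagger/\|v\|^2$ is the right horizontal generator for the chosen (right-invariant) convention. Completeness of $(\Omega,g_v)$ is automatic from the product structure, which guarantees that minimising geodesics exist.
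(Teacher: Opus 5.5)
Your proof is correct and follows the paper's route: the paper also projects via $\Phi\mapsto\Phi v_i$, identifies the horizontal lift $\dot\Phi=\frac{\dot v v^\dagger}{\|v\|^2}\Phi$ and the induced metric $g_v$, and then computes $\mathrm{dist}_\Omega$ from the product structure $\mathbb{R}\times S^{2m-1}$. The one difference is how the reduction $\mathrm{dist}(\mathrm{id},S_{if})=\mathrm{dist}_\Omega(v_i,v_f)$ is justified: your Step 1 proves it by hand, using the orthogonal split $\|G\|_{HS}^2=\|GP_v\|_{HS}^2+\|G(1-P_v)\|_{HS}^2$ and the explicit linear-ODE lift, whereas the paper frames it as a principal bundle and cites the standard fact that the distance to a fibre is realised by a horizontal geodesic, so your version is the more self-contained of the two.
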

\vspace{.5cm}

Before proving the Lemma we will need to recall
some  geometric notions.

\subsection{Principal bundle}\label{sec: bundle}
Let $v_i,v_f\in \Omega:=\mathbb{C}^m\setminus\{0\}$ and define 
\begin{equation}
    \text{Stab}(v_i):=\left\lbrace \Phi\in GL(m,\mathbb{C})|\, \Phi v_i = v_i \right\rbrace
\end{equation}
The quadruple $(M,\Omega,\pi,\alpha)$ forms a principal bundle, with the projection map $\pi:M\to \Omega,\,\pi(\Phi):=\Phi v_i = v$ and the group action $\alpha: M\times \text{Stab}(v_i) \to M,\,  \alpha(\Psi, \Phi) := \Psi \circ \Phi$. We define the vertical space $V_\Phi M := \mathrm{ker}(\d \pi_\Phi)$ and its orthogonal complement, the horizontal space
\begin{equation}
    H_\Phi M := \mathrm{ker}(\d \pi_\Phi)^\perp =\{ X\in T_\Phi M | \, \exists w\in \mathbb{C}^{m} : X=w v^\dagger \Phi  \}. 
\end{equation}

We are interested in a geodesic distance $\dist{\id,S_{if}} = \dist{\id,\pi^{-1}(v_f)}$. It is well known that this distance is realised by a completely horizontal geodesic \cite{carmo_riemannian_2013}. Since the horizontal space $H_\Phi M$ is isomorphic to $T_v\Omega$, the latter inherits the Riemannian metric, which takes particularly simple form. Let $h_{v,\Phi}:T_v \Omega \to H_{\Phi} M,\, h_{v,\Phi}(A) = \left(\d \pi_{\Phi}\vert_{H_{\Phi}}\right)^{-1}(A)$, where $\Phi\in\pi^{-1}(v)$, be a horizontal lift realising this isomorphism. The inherited metric is
\begin{equation}
    g_v:T_v \Omega \times T_v \Omega \to \mathbb{R},\quad g_v(A, B) := g_{\Phi}(h_{v,\Phi}(A),h_{v,\Phi}(B))= \frac{\Re(A^\dagger B)}{\norm{v}^2},
\end{equation}
which indeed does not depend on the choice of $\Phi$, hence it is well defined.
It follows that the associated distance function $\text{dist}_\Omega\!:\Omega\times\Omega \to \mathbb{R}_{\geq 0}$ satisfies $\text{dist}_\Omega(v_i,v_f) = \dist{\id,\pi^{-1}(v_f)}$.

\subsection{Proof of Lemma~\ref{lemm: dist(id, S)}}
It is enough to compute $\text{dist}_\Omega(v_i,v_f)$. Consider the decomposition $v = r u$, where $r := \norm{v}$. Observe that the fundamental form associated with the inner product $g_v$ satisfies
\begin{equation}
    \d s^2 = \frac{\norm{\d v}^2}{\norm{v}^2} = \d( \log r)^2 + \norm{\d u}^2 = \d (\log r)^2 + \d s_{S^{(2 m -1)}}^2.
\end{equation}
Thus $\Omega \cong \mathbb{R} \times S^{(2 m -1 )}$ equipped with a product Riemannian metric, which makes geodesics particularly simple - they are formed by geodesics on each of the components. Define ${r_i:=\norm{v_i}, r_f:=\norm{v_f}, u_i:=\frac{v_i}{r_i}, u_f:=\frac{v_f}{r_f}}$, and
\begin{equation}
\lambda:=\log\frac{r_f}{r_i},
\qquad
\theta:=\arccos\!\left(\Re u_i^\dagger u_f\right).
\end{equation}
The shortest geodesic from $v_i$ to $v_f$ is given by
\begin{equation}
    v\!: [0,1]\to\Omega,\quad v(t):=r_i e^{\lambda t}u(t),
\end{equation}
where $u(t)$ is the real spherical geodesic
\begin{equation}
    u(t)=\cos(\theta t)\, u_i + \sin(\theta t)\, u_i^\perp, \qquad u_i^\perp := (u_f - \cos\theta\, u_i)/\sin\theta.
\end{equation}
provided that $\theta \not\in \{0,\pi\}$. If $\theta=0$ then $u(t)\equiv u_i$, and if $\theta=\pi$ then $u_i^\perp$ can be taken to be an arbitrary normalized vector satisfying $\Re u_i^\dagger u_i^\perp = 0$.
The length of $v(t)$ is hence equal
\begin{equation}
    \mathrm{dist}_\Omega (v_i,v_f) = \sqrt{\lambda^2 + \theta^2}.
\end{equation}
The geodesic $\Phi:[0,1]\to M$ realising $\dist{\id,S_{if}}$ satisfies then
\begin{equation}
    \dot{\Phi}(t) := h_{v(t),\Phi(t)}(\dot{v}(t)) = \frac{\dot{v}(t)v(t)^\dagger}{\norm{v(t)}^2}  \Phi(t),
\end{equation}
and the claim follows. \qed

\subsection{Proof of Theorem~\ref{thm: speed limit pure}}

    Define
    \begin{equation}
        \mathfrak{G}_{if}(\mathcal{H}):=\left\lbrace G \in \mathfrak{G}(\mathcal{H}) | \exists t\geq 0: \mathbf{T}e^{-i\int_0^t G(s)\,\d s}\ket{\psi_i} = \ket*{\psi_f} \right\rbrace.
    \end{equation}
    Given $G\in\mathfrak{G}_{if}(\mathcal{H})$ consider its integral curve passing through $\psi_i$ at time $t=0$, namely $\ket{\psi_G}: [0,\infty) \to \mathcal{H}, {\ket{\psi_G(t)} = \mathbf{T}e^{-i\int_0^t G(s)\,\d s}\ket{\psi_i}}$, and let $t_{G}\geq 0$ be the smallest time for which $\ket*{\psi_G(t_G)}=\ket*{\psi_f}$. Because $G$ is bounded, we have that $\ket{\psi_G(t)}\neq 0$ for all $t\in[0,\infty)$, and therefore $[0,\infty)\mapsto \ket{\psi_G(t)}$ is a path on $\mathcal{H}\setminus\{0\} \sim \Omega:=\mathbb{C}^{n}\setminus\{0\}$. One then arrives at the following chain of inequalities
    \begin{equation}
        t_G \geq \frac{\mathrm{dist}_\Omega(\psi_i,\psi_f)}{\frac{1}{t_G} \int_0^{t_G} \sqrt{g_{\psi_G(t)}(\dot{\psi}_G(t),\dot{\psi}_G(t))}\,\d t} = \frac{\mathrm{dist}_\Omega(\psi_i,\psi_f)}{\frac{1}{t_G} \int_0^{t_G} \frac{\norm{\dot{\psi}_G(t)}}{\norm{\psi_G(t)}}\,\d t} \geq 
        \frac{\mathrm{dist}_\Omega(\psi_i,\psi_f)}{\frac{1}{t_G} \int_0^{t_G} \norm{G(t)}_{op}\,\d t} \geq
        \frac{\mathrm{dist}_\Omega(\psi_i,\psi_f)}{\frac{1}{t_G} \int_0^{t_G} \unorm{G(t)}\,\d t} = \mathrm{dist}_\Omega(\psi_i,\psi_f),
    \end{equation}
    where the first inequality is a triangle inequality for the metric on $T\Omega$ from Lemma~\ref{lemm: dist(id, S)} in dimension $m=n$. The second one (equality) uses the definition of this metric, third comes from the definition of $\ket{\psi_G(t)}$ and the definition of operator norm, fourth from Fact~\ref{fact: operator norm}, and the fifth (equality) from the assumption $\unorm{G(t)}=1$. 
    The claim then follows by Lemma~\ref{lemm: dist(id, S)} and the fact that all unitarily invariant norms agree on rank-1 operators. \qed

\subsection{Proof of both Corollaries}

\begin{proof}[Proof of Corollary~\ref{cor: speed limit Lindblad}]
    Observe that $(\mathbb{C}^{n^2}\setminus\{0\},\norm{\bullet})$ is isomorphic to $(\mathbb{C}^{n\times n}\setminus\{0\},\norm{\bullet}_{HS})$. The proof is then completely analogous to the proof of Theorem~\ref{thm: speed limit pure} using Lemma~\ref{lemm: dist(id, S)} in dimension $m=n^2$. We however cannot claim the tightness of the bound as the generator of the saturating dynamics may not be a Lindbladian. In particular, Lindbladian dynamics is trace preserving while the identified geodesic has a constant trace iff $\lambda=\theta=0$. In order to tighten the bound it is possible to restrict the metric on $T\Omega$ to the hyperplane $\Tr[\rho]=1$, however the resulting distance function is not explicit, except for the trivial cases.
\end{proof}

\begin{proof}[Proof of Corollary~\ref{cor: speed limit non-Herm}]
    Similarly as for the Corollary~\ref{cor: speed limit Lindblad}, the proof follows from Lemma~\ref{lemm: dist(id, S)} in dimension $m=n^2$. The additional factor of $1/2$ comes from the following inequality for any $G\in \mathfrak{G}(\mathcal{H})$
    \begin{equation}
        g_{\rho(t)}(\dot{\rho}(t),\dot{\rho}(t))^{1/2} = \frac{\norm{\dot{\rho}(t)}_{HS}}{\norm{\rho(t)}_{HS}} = \frac{\norm{G(t) \rho(t) - \rho(t) G^\dagger (t) }_{HS}}{\norm{\rho(t)}_{HS}} \leq 2\norm{G(t)}_{op} \leq 2\unorm{G(t)}= 2.
    \end{equation}
\end{proof}

\begin{remark}
Observe that in all the above proofs we only used the fact that the dynamics takes place on a certain Riemannian manifold $\Omega$ on which the geodesic distance function is known explicitly. In principle, the first part of the Lemma~\ref{lemm: dist(id, S)} regarding $GL(m,\mathbb{C})$ could have been therefore omitted. However, we decided to preserve this part of the lemma for two reasons. First, as explained in Section~\ref{sec: bundle}, it provides a motivation for the particular choice of the metric on $T\Omega$, as the one arising from the most natural metric on the space of maps $GL(m,\mathbb{C})$. Second, it constitutes an interesting result on its own, since while the geodesics on $GL(m,\mathbb{C})$ are well understood\cite{andruchow_left_2014}, much less is known about the distance function there\cite{kupferman_strain_2016}.
\end{remark}

\section{Exemplary applications}
\subsection{Pure states}
Consider $\ket{\psi_i},\ket{\psi_f} \in \mathcal{H}$, $\norm{\psi_i}=\norm{\psi_f}=1$. From Theorem~\ref{thm: speed limit pure} we have that
\begin{equation}
    \tau_{if}^\text{pure} = \theta := \arccos(\Re \braket{\psi_i}{ \psi_f}),\label{eq: QSL normed}
\end{equation}
and for simplicity let us assume that $\theta\in(0,\pi)$. This time is attained by an evolution generated by a time-dependent, genuinely non-Hermitian Hamiltonian 
\begin{equation}
    G^*(t):= i \ketbra{\dot{\psi}(t)}{\psi(t)},\quad \ket{\psi(t)}:= \frac{\sin(\theta-t)}{\sin\theta}\ket{\psi_i}+\frac{\sin(t)}{\sin\theta}\ket{\psi_f},\label{eq: pure generator}
\end{equation}
which coincides with the previously derived geodesic in the space of minimal purifications of quantum states with fixed rank\cite{hornedal_geometrical_2025, spehner_bures_2025}.

If one is interested in reaching state $\ket{\psi_f}$ up to an overall phase, the minimal time to do so is given by a Fubini-Study distance between states $\ket{\psi_i}$ and $\ket{\psi_f}$ (also known as Bures angle \cite{BZ17}
when $\ket{\psi_i},\ket{\psi_f}$ are seen as purifications\cite{uhlmann_transition_1976, jozsa_fidelity_1994, hornedal_geometrical_2025})
\begin{equation}
    \theta_{FS} :=\min_{\varphi} \arccos(\Re e^{i\varphi}\braket{\psi_i}{ \psi_f}) = \arccos(\abs{\braket{\psi_i}{\psi_f}}),\label{eq: QSL non-Herm}
\end{equation}
This corresponds to the adjustment of the phase of the final state $\ket{\psi_f} \to \ket{\bar{\psi}_f}:=e^{-i\mathrm{Arg} \braket{\psi_i}{\psi_f}}\ket{\psi_f}$ in \eqref{eq: QSL normed}. The generator of the saturating dynamics \eqref{eq: pure generator} acts entirely within the subspace spanned by the initial and final states. On this subspace it takes a particularly simple form when written in the basis $\{\ket{\psi_i},\ket{\psi_i^\perp}\}$, where  $\ket{\psi_i^\perp} :=\frac{\ket{\bar{\psi}_f} - \ket{\psi_i}\braket{\psi_i}{\bar{\psi}_f}}{\sqrt{1-{\braket{\psi_i}{\bar{\psi}_f}}^2}}$, namely
\begin{equation}
    G^*(t)=\frac{1}{2}\sigma_y + \frac{i}{2} e^{-i \sigma_y t} \sigma_x e^{i \sigma_y t},
\end{equation}
where $\sigma_x, \sigma_y$ are Pauli matrices.
This time-optimal non-Hermitian dynamics may be compared with the seminal Mandelstam-Tamm QSL for Hermitian dynamics\cite{mandelstam_uncertainty_1991}
\begin{equation}
    \tau^{Herm}_{if} \geq \frac{\theta_{FS}}{\Delta E},
\end{equation}
where $\Delta E$ is the energy uncertainty, necessarily equal in both initial and final state. From Popoviciu's inequality on variances\cite{popoviciu_sur_1935} it follows that $\Delta E \leq (E_{max}-E_{min})/{2}$. Now, if the norm of the Hamiltonian generating the dynamics satisfies $\norm{H}_{p}=1$ we have $E_{max}-E_{min}\leq 2^{1-1/p}$ and
\begin{equation}
    \tau^{Herm}_{if} \geq 2^{1/p}\theta_{FS}.
\end{equation}
This bound is saturated by dynamics generated by the Hamiltonian acting entirely in the subspace spanned by the initial and final states, which takes the following form when written in the basis $\{\ket{\psi_i},\ket{\psi_i^\perp}\}$\cite{carlini_timeoptimal_2006, brody_optimum_2006}
\begin{equation}
    H = \frac{1}{2^{1/p}}\sigma_y.
\end{equation}
We hence observe that allowing for non-Hermitian generators provides a speed-up of factor $2^{1/p}$, as both bounds are tight. This speed-up arises thanks to the continuous rotation of the anti-Hermitian component of the optimal generator.
Along the optimal trajectory, the velocity induced by such a rotating anti-Hermitian component remains aligned with the velocity generated by the Hermitian component. The two then add constructively, yielding the maximal enhancement of the evolution speed. We stress that this result is valid in the Hilbert space of an arbitrary dimension $n$, but the optimal dynamics takes place in its $2$-dimensional subspace.

\subsection{States with equal purity}
Purity of a normalised quantum state is defined as $\gamma(\rho):=\Tr[\rho^2]$, while the relative purity of $\rho$ wrt $\sigma$ is given by ${f(\rho|\sigma):=\Tr[\rho\sigma]/\Tr[\sigma^2]}$. Observe that for $\rho_i, \rho_f$ of equal purity, the QSLs from Corollaries~\ref{cor: speed limit Lindblad}-\ref{cor: speed limit non-Herm} simplify to functions of their relative purity
\begin{align}
    \tau_{if}^{Lind} &\geq \arccos\left( \frac{\Tr[\rho_f \rho_i]}{\Tr[\rho_i^2]}\right), \\ 
    \tau_{if}^{NH} &\geq \frac{1}{2}\arccos\left( \frac{\Tr[\rho_f \rho_i]}{\Tr[\rho_i^2]}\right).
\end{align}
Interestingly, the first of these bounds coincides with the recent result by Srivastav, Pandey, Mohan, and Pati for the choice of operator or Hilbert-Schmidt norm (cf. (7) in \,\cite{srivastav_family_2025} and preceding discussion). However, in a general case of non-equal purities, our result provides a tighter bound, as it is enlarged by a term accounting for the change of purity. This term, in turn, can be recovered from the result by Uzdin and Kosloff\cite{uzdin_speed_2016} by upper-bounding their Eq.(9) with $2\tau_{if}^{Lind}$. Our result makes precise this apparent orthogonality of purity change and Liouville space rotation.

\subsection{Replacer dynamics and dephasing Lindbladian}
Let $n=2$ and $\ket{\psi}\in\mathcal{H}$ be a normalised state. Pick $\ket{\psi_\perp}$ such that $\braket{\psi}{\psi_\perp}=0$ and consider the following Lindblad generator
\begin{equation}
    L^R(\rho)=\frac{1}{\sqrt{2}}\sum_{i=1,2}\left( L_i\rho L_i^\dagger - \frac{1}{2}\{L_i^\dagger L_i, \rho\} \right), \label{eq: replacer}
\end{equation}
where $L_1 = \ketbra{\psi}{\psi}, L_2 = \ketbra{\psi}{\psi_\perp}$, and $\norm{L}_{op}=1$. It generates the following replacer dynamics $\rho(t) = e^{-\frac{t}{\sqrt{2}}}\rho_0+(1-e^{-\frac{t}{\sqrt{2}}})\sigma$, where $\rho_0\in \mathcal{D}(\mathcal{H})$ is an initial state and $\sigma=\ketbra{\psi}{\psi}$, which is a straight line inside the Bloch ball.
Let $\rho_0 = \ketbra{+}{+}$, and let $\ket{\psi} = \cos(\frac{\theta_B}{2})\ket{0} - \sin(\frac{\theta_B}{2})\ket{1}$ for some $\theta_B\in(0,\pi)$. Then, at time $t^R=\sqrt{2}\log(1+\frac{1}{\sin(\theta_B)})$ the trajectory passes through a classical state $\omega = \frac{1}{2}\left( \mathbbm{1} + \tan(\frac{\pi}{4}-\frac{\theta_B}{2})\sigma_z\right)$.

\begin{figure}[b]
    \begin{center}
        \input{figures/comparison.pgf}
    \end{center}
    \caption{Comparison of the arrival times $t^R, t^D$ (orange, green) at the state $\omega$ from the state $\ketbra{+}$, under the single qubit dynamics generated by \eqref{eq: replacer} and \eqref{eq: dephasing} respectively, with the derived QSL \eqref{eq: QSL replacer} (black). The divergence of $t^R$ at $\theta_B =0$ is caused by the fact that the final state $\omega$ becomes pure, while the divergence of $t^D$ at $\theta_B = \pi/2$ comes from the fact that $\omega$ becomes maximally mixed - they both are being reached only asymptotically by the corresponding dynamics. For $\theta_B \in (\pi/2,\pi)$ the plot is symmetric with respect to $\pi/2$.
    Inset: trajectories of the replacer \eqref{eq: replacer} (orange) and dephasing \eqref{eq: dephasing} (green) dynamics in a Bloch ball.}
    \label{fig: replacer}
\end{figure}

Consider now the following dephasing Lindblad generator,
\begin{equation}
    L^{D}(\rho):=\frac{1}{1+\sqrt{1+\chi^2}}\left(-i \left[\frac{\chi}{2}\sigma_y,\rho\right] + \sigma_z \rho \sigma_z - \rho \right), \label{eq: dephasing}
\end{equation}
where $\chi$ solves equation:
$\tan(\frac{\pi}{4}-\frac{\theta_B}{2})=-e^{-\mathrm{arctanh}\left({\sqrt{1-\chi^2}}\right)/\sqrt{1-\chi^2}}\sign(\chi)$, and both the square root and the arctanh are to be understood as their analytic continuations. It is normalised as $\norm{L^D}_{op}=1$ and generates the dynamics taking the initial state
$\rho_0$ to the final one $\omega$ in time $t^D=\frac{1+\sqrt{1+\chi^2}}{\sqrt{1-\chi^2}}\mathrm{arctanh}\left(\sqrt{1-\chi^2}\right)$.

The speed limit  derived here provides a lower bound for the shortest evolution time,
\begin{equation}
    \min(t^R,t^D) \geq \sqrt{ \frac{1}{4}\log^2\left(1+\sin\theta_B\right) + \arccos^2 \left( \frac{\sqrt{1+\sin\theta_B}}{2} \right) }.\label{eq: QSL replacer}
\end{equation}
We illustrate the tightness of this bound in Fig.~\ref{fig: replacer}. Note that we have chosen here to normalise the Lindblad generators according to their operator norm. Among all the unitarily invariant norms, our QSL is the tightest for such a choice, which follows from the Fact~\ref{fact: operator norm}.

\subsection{Time-optimal transport in a qubit}
Consider the following norm defined for on the cone of Lindbladians
\begin{equation}
    \norm{L}_{HK} := \frac{1}{2}\Tr[\abs{H}] + \frac{1}{2}\Tr[K], \label{eq: AD}
\end{equation}
where $H$ is the traceless Hamiltonian associated with $L$, and $K$ is the Kossakowski matrix in the basis of generalised normalised Gell-Mann matrices (see \,\cite{krakow_quantum} for details).

The amplitude-damping Lindbladian normalised to $\norm{L^\text{AD}}_{HK} = 1$ is defined as
\begin{equation}
    L^\text{AD}(\rho) = 2\left( \ketbra{1}{0}\!\rho\!\ketbra{0}{1} - \frac{1}{2}\lbrace \ketbra{0},\rho \rbrace\right).
\end{equation}
It encodes the time-optimal classical Markovian dynamics between the states $\eta_{1}=\ketbra{0}$ and ${\eta_{-1}=\ketbra{1}}$, under the condition that the corresponding purely dissipative Lindbladian is normalised as above --
see \,\cite{krakow_classical}. On the other hand, the time-optimal purely quantum (unitary) dynamics between $\eta_1$ and $\eta_{-1}$, under the normalisation condition, is generated by
\begin{equation}
    L^\text{H}(\rho) = -i[\sigma_y,\rho].
\end{equation}

\begin{figure}[!ht]
    \begin{center}
        \input{figures/optimal.pgf}
    \end{center}
    \caption{Comparison of the arrival times $t^{AD}, t^z$ (red solid, red dotted) at the state $\eta_z$ from the state $\eta_1=\ketbra{0}$, under the single qubit dynamics generated by $L^\text{AD}$ and $L^{\alpha_z}$ respectively, with the QSL \eqref{eq: QSL opt} correspondingly adjusted for their norms  (black solid, black dotted). At $z = z^*$ (grey line) there is a transition between $\alpha \neq 0$ and $\alpha =0$ ($L^\text{AD}$) being optimal. Data courtesy to G. M\"uller-Rigat (Jagiellonian University).
    Inset: the equal time trajectories generated by $L^\text{AD}$ (red solid) and $L^{\alpha\neq0}$ (red dotted) from $\eta_1$ to the critical $\eta_{z^*}$.}
    \label{fig: time-opt}
\end{figure}

It is natural to consider the mixture $L^\alpha = \alpha L^\text{H} + (1-\alpha) L^\text{AD}$ as generator of the time-optimal evolution between the states $\eta_1$ and $\eta_z$ with $z\in(-1,1)$. 
Numerical simulations suggest \cite{krakow_quantum} 
that for $z \ge z_* \approx -0.92$ the optimal choice is $\alpha=0$, i.e.\ the purely dissipative evolution.
In such a case the time it requires to reach $\eta_z$ equals
\begin{equation}
    t^\text{AD} = 
    -\frac{1}{2}\log( \frac{1+z}{2}).
\end{equation}
For $z\le z_*$ the Hamiltonian contribution becomes non-trivial as the optimal value of the weight $\alpha$ is positive. In Fig.~\ref{fig: time-opt} we compare the time $t^{z}$, defined as the shortest time in which it is possible to reach $\eta_z$ from $\eta_1$ using $L^{\alpha}$ for some $\alpha$, with  bound \eqref{eq: limit Lindblad} adjusted for the operator norm of the optimal generator
\begin{equation}
    t^z \geq \frac{\sqrt{2}}{\sqrt{17\alpha_z^2-18\alpha_z+9+(1-\alpha_z)\sqrt{65\alpha_z^2-98\alpha_z+49}}}\sqrt{\log^2\left( \sqrt{\frac{1+z^2}{2}}\right) + \arccos^2\left(\frac{1+z}{\sqrt{2(1+z^2)}}\right)},\label{eq: QSL opt}
\end{equation}
where we made explicit that
the optimal weight 
$\alpha$ depends on $z$.
From the figure we observe that the provided speed limit is asymptotically tight for the vanishing displacement $1-z$.

\subsection{Typical dynamics}
To visualise proposed QSLs we performed numerical simulations of random non-Hermitian Hamiltonians and random purely dissipative Lindblad generators -- see Fig.~\ref{fig: simulations}.
We observe very good performance of the QSL for non-Hermitian evolution of pure states (Theorem~\ref{thm: speed limit pure}), while the bound for Lindblad-generated dynamics (Corollary~\ref{cor: speed limit Lindblad}) deteriorates with the size of the Hilbert space $n$. This can be hinted by the fact that the dynamics saturating the bounds is generated by a rank~1 generator, and the minimal rank of a non-zero, purely dissipative Lindblad generator equals at least $2(n-1)$ (cf Theorem~4 in  \,\cite{amato_number_2024} and the rank-nullity theorem).

\begin{figure}[!ht]
    \begin{subfloat}[]
    {
        \input{figures/optimized_parallel_n2.pgf}
    }
    \end{subfloat}
    \begin{subfloat}[]
    {
        \input{figures/optimized_parallel_n4.pgf}
    }
    \end{subfloat}\hfill
    \begin{subfloat}[]
    {
        \input{figures/GL_n=4_n-gen=300_n-init=300.pgf}
    }
    \end{subfloat}
    \caption{Tightness of bounds derived: 
    Value of bin $(d,t)$ corresponds to the probability 
    that a random initial state $v_i$ evolved by a random generator 
    for the first time crossed at time $t$ the sphere centred 
    at $v_i$ of radius $d$ measured with respect to the distance (\ref{dist_pytagor}).
    Purely dissipative Lindblad dynamics in Hilbert spaces of dimension (a) $n=2$ and (b) $n=4$. Each plot corresponds to choices of $300$ random initial conditions drawn from the Wishart ensemble and normalised to unit trace, and $300$ random Kossakowski matrices drawn from the Wishart ensemble and normalised in such a way that $\norm{L}_{op}=1$. The distance $d$ corresponds to the bound \eqref{eq: limit Lindblad}. Observe that the QSL becomes looser in higher dimensions. 
    Non-Hermitian evolution in Hilbert space of dimension (c)  $n=4$. 
    The plot corresponds to choices of $300$ random initial pure states taken 
    to be $n$-dimensional complex standard random vectors and 
    $300$ random Hamiltonians drawn from the Ginibre ensemble and normalised to $\norm{G}_{op}=1$. 
    The distance $d$ corresponds to the minimal time \eqref{eq: limit pure}.
    In this case, the QSL remains accurate also in higher dimensions.
    }
    \label{fig: simulations}
\end{figure}

\section{Concluding Remarks}

In this work we ask how quickly one quantum state (pure or mixed) can be converted into another when the generator of the evolution is constrained only by its size, measured by a unitarily invariant norm. This approach covers Lindblad generators and non-Hermitian Hamiltonians in a Hilbert space of arbitrary dimension $n$.
Using a geodesic distance on the space of states constructed from the natural geometry of the space of automorphisms, we derive a general bound for the evolution time that does not depend on which unitarily invariant norm is chosen.
Exact results are derived for the time evolution of pure states under a normalised non-Hermitian Hamiltonian. The optimal generator is time-dependent and acts only in the two-dimensional subspace spanned by the initial and final states. For Hamiltonians with fixed Schatten-$p$ norm, the time is reduced by a factor $2^{1/p}$ compared with the Mandelstam–Tamm bound, so non-Hermitian generators provide a genuine speed-up. Numerical simulations for random generators illustrate that the pure-state bound stays accurate in higher dimensions, while the Lindblad bound becomes looser as $n$ grows.
Although other bounds for Lindblad dynamics were obtained in a parallel work \cite{krakow_quantum}, the problem of deriving exact results in this case and finding the saturating generator remains open.

The approach presented here can also be discussed from the point of view of quantum resource theory \cite{CG19}.
The norm of the Hamiltonian operator $H$ or of the Lindblad generator $L$ has the dimension of inverse time and can be interpreted as a valuable resource that one has to pay for. Thus the key question is how to reach the final state 
$\rho_f$ from a certain initial state $\rho_i$
in the shortest time $T$, given the fixed resources $\unorm{L}=1$. Alternatively, the travel time can be fixed, $T=1$, and one can ask for the optimal path that minimises the resources, measured by $\unorm{L}$.
In the cases where the generator saturating the derived bound is known, it solves both dual problems.

\section*{Acknowledgements}
The authors would like to thank Niklas H\"ornedal for his help in finding the right approach to the problem  and simplifying the final proof. 
This work would not have been possible without his assistance.
It is a pleasure to thank Alfonso  de Bobadilla,
Dariusz Chru{\'s}ci{\'n}ski,  Mykhailo Hontarenko,
  Guillem M\"uller-Rigat  and Fereshte Shahbegi for numerous  discussions and fruitful correspondence.

OAP thanks Aur\'elia Chenu and Adolfo del Campo for their support of the work leading to this manuscript. OAP also thanks Kyunghoon Han for the early-stage discussions.

This research was funded in whole, or in part, by the Luxembourg National Research
Fund (FNR), grant reference C24/MS/18969168/MOS. For the purpose of open access, and
in fulfilment of the obligations arising from the grant agreement, the author has applied
a Creative Commons Attribution 4.0 International (CC BY 4.0) license to any Author
Accepted Manuscript version arising from this submission. 

K{\.Z} acknowledges financial support from the European Union through ERC Advanced Grant TAtypic,  Project No. 101142236.

\bibliography{bibliography}

\end{document}

%% file: figures/optimized_parallel_n2.pgf
\begingroup%
\makeatletter%
\begin{pgfpicture}%
\pgfpathrectangle{\pgfpointorigin}{\pgfqpoint{2.077478in}{2.077478in}}%
\pgfusepath{use as bounding box, clip}%
\begin{pgfscope}%
\pgfsetbuttcap%
\pgfsetmiterjoin%
\definecolor{currentfill}{rgb}{1.000000,1.000000,1.000000}%
\pgfsetfillcolor{currentfill}%
\pgfsetlinewidth{0.000000pt}%
\definecolor{currentstroke}{rgb}{1.000000,1.000000,1.000000}%
\pgfsetstrokecolor{currentstroke}%
\pgfsetdash{}{0pt}%
\pgfpathmoveto{\pgfqpoint{0.000000in}{0.000000in}}%
\pgfpathlineto{\pgfqpoint{2.077478in}{0.000000in}}%
\pgfpathlineto{\pgfqpoint{2.077478in}{2.077478in}}%
\pgfpathlineto{\pgfqpoint{0.000000in}{2.077478in}}%
\pgfpathlineto{\pgfqpoint{0.000000in}{0.000000in}}%
\pgfpathclose%
\pgfusepath{fill}%
\end{pgfscope}%
\begin{pgfscope}%
\pgfsetbuttcap%
\pgfsetmiterjoin%
\definecolor{currentfill}{rgb}{1.000000,1.000000,1.000000}%
\pgfsetfillcolor{currentfill}%
\pgfsetlinewidth{0.000000pt}%
\definecolor{currentstroke}{rgb}{0.000000,0.000000,0.000000}%
\pgfsetstrokecolor{currentstroke}%
\pgfsetstrokeopacity{0.000000}%
\pgfsetdash{}{0pt}%
\pgfpathmoveto{\pgfqpoint{0.502394in}{0.450309in}}%
\pgfpathlineto{\pgfqpoint{1.947964in}{0.450309in}}%
\pgfpathlineto{\pgfqpoint{1.947964in}{1.938898in}}%
\pgfpathlineto{\pgfqpoint{0.502394in}{1.938898in}}%
\pgfpathlineto{\pgfqpoint{0.502394in}{0.450309in}}%
\pgfpathclose%
\pgfusepath{fill}%
\end{pgfscope}%
\begin{pgfscope}%
\pgfpathrectangle{\pgfqpoint{0.502394in}{0.450309in}}{\pgfqpoint{1.445569in}{1.488589in}}%
\pgfusepath{clip}%
\pgfsys@transformshift{0.502394in}{0.450309in}%
\pgftext[left,bottom]{\includegraphics[interpolate=true,width=1.446667in,height=1.490000in]{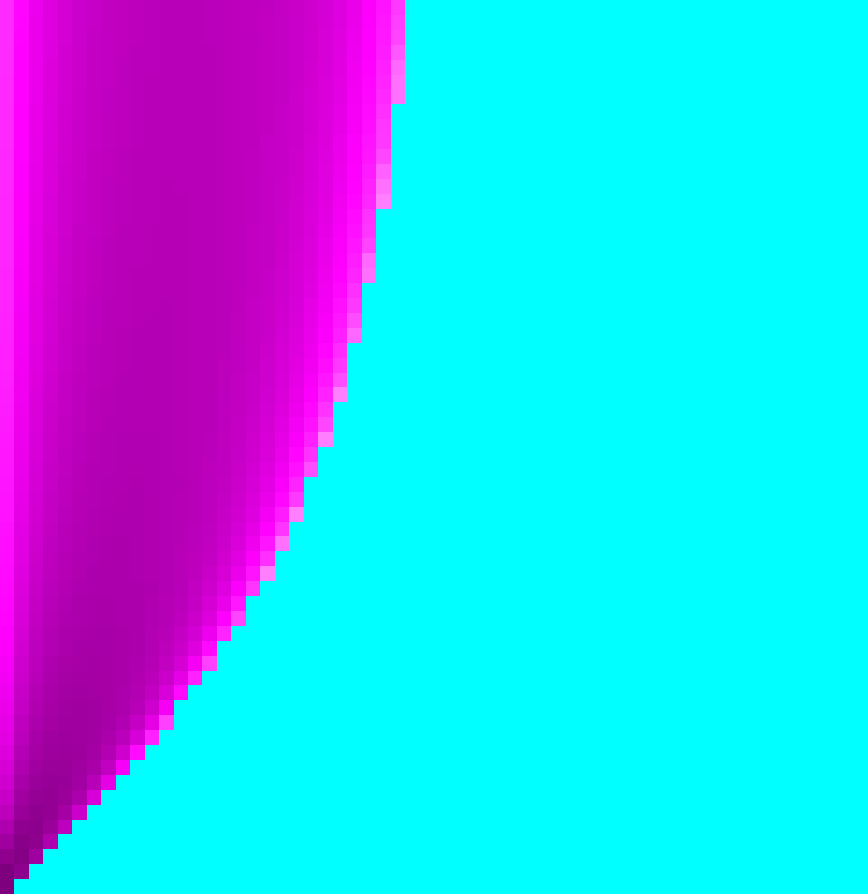}}%
\end{pgfscope}%
\begin{pgfscope}%
\pgfsetbuttcap%
\pgfsetroundjoin%
\definecolor{currentfill}{rgb}{0.000000,0.000000,0.000000}%
\pgfsetfillcolor{currentfill}%
\pgfsetlinewidth{0.803000pt}%
\definecolor{currentstroke}{rgb}{0.000000,0.000000,0.000000}%
\pgfsetstrokecolor{currentstroke}%
\pgfsetdash{}{0pt}%
\pgfsys@defobject{currentmarker}{\pgfqpoint{0.000000in}{-0.048611in}}{\pgfqpoint{0.000000in}{0.000000in}}{%
\pgfpathmoveto{\pgfqpoint{0.000000in}{0.000000in}}%
\pgfpathlineto{\pgfqpoint{0.000000in}{-0.048611in}}%
\pgfusepath{stroke,fill}%
}%
\begin{pgfscope}%
\pgfsys@transformshift{0.502394in}{0.450309in}%
\pgfsys@useobject{currentmarker}{}%
\end{pgfscope}%
\end{pgfscope}%
\begin{pgfscope}%
\definecolor{textcolor}{rgb}{0.000000,0.000000,0.000000}%
\pgfsetstrokecolor{textcolor}%
\pgfsetfillcolor{textcolor}%
\pgftext[x=0.502394in,y=0.353087in,,top]{\color{textcolor}{\rmfamily\fontsize{8.000000}{9.600000}\selectfont\catcode`\^=\active\def^{\ifmmode\sp\else\^{}\fi}\catcode`\%=\active\def
\end{pgfscope}%
\begin{pgfscope}%
\pgfsetbuttcap%
\pgfsetroundjoin%
\definecolor{currentfill}{rgb}{0.000000,0.000000,0.000000}%
\pgfsetfillcolor{currentfill}%
\pgfsetlinewidth{0.803000pt}%
\definecolor{currentstroke}{rgb}{0.000000,0.000000,0.000000}%
\pgfsetstrokecolor{currentstroke}%
\pgfsetdash{}{0pt}%
\pgfsys@defobject{currentmarker}{\pgfqpoint{0.000000in}{-0.048611in}}{\pgfqpoint{0.000000in}{0.000000in}}{%
\pgfpathmoveto{\pgfqpoint{0.000000in}{0.000000in}}%
\pgfpathlineto{\pgfqpoint{0.000000in}{-0.048611in}}%
\pgfusepath{stroke,fill}%
}%
\begin{pgfscope}%
\pgfsys@transformshift{0.984251in}{0.450309in}%
\pgfsys@useobject{currentmarker}{}%
\end{pgfscope}%
\end{pgfscope}%
\begin{pgfscope}%
\definecolor{textcolor}{rgb}{0.000000,0.000000,0.000000}%
\pgfsetstrokecolor{textcolor}%
\pgfsetfillcolor{textcolor}%
\pgftext[x=0.984251in,y=0.353087in,,top]{\color{textcolor}{\rmfamily\fontsize{8.000000}{9.600000}\selectfont\catcode`\^=\active\def^{\ifmmode\sp\else\^{}\fi}\catcode`\%=\active\def
\end{pgfscope}%
\begin{pgfscope}%
\pgfsetbuttcap%
\pgfsetroundjoin%
\definecolor{currentfill}{rgb}{0.000000,0.000000,0.000000}%
\pgfsetfillcolor{currentfill}%
\pgfsetlinewidth{0.803000pt}%
\definecolor{currentstroke}{rgb}{0.000000,0.000000,0.000000}%
\pgfsetstrokecolor{currentstroke}%
\pgfsetdash{}{0pt}%
\pgfsys@defobject{currentmarker}{\pgfqpoint{0.000000in}{-0.048611in}}{\pgfqpoint{0.000000in}{0.000000in}}{%
\pgfpathmoveto{\pgfqpoint{0.000000in}{0.000000in}}%
\pgfpathlineto{\pgfqpoint{0.000000in}{-0.048611in}}%
\pgfusepath{stroke,fill}%
}%
\begin{pgfscope}%
\pgfsys@transformshift{1.466107in}{0.450309in}%
\pgfsys@useobject{currentmarker}{}%
\end{pgfscope}%
\end{pgfscope}%
\begin{pgfscope}%
\definecolor{textcolor}{rgb}{0.000000,0.000000,0.000000}%
\pgfsetstrokecolor{textcolor}%
\pgfsetfillcolor{textcolor}%
\pgftext[x=1.466107in,y=0.353087in,,top]{\color{textcolor}{\rmfamily\fontsize{8.000000}{9.600000}\selectfont\catcode`\^=\active\def^{\ifmmode\sp\else\^{}\fi}\catcode`\%=\active\def
\end{pgfscope}%
\begin{pgfscope}%
\pgfsetbuttcap%
\pgfsetroundjoin%
\definecolor{currentfill}{rgb}{0.000000,0.000000,0.000000}%
\pgfsetfillcolor{currentfill}%
\pgfsetlinewidth{0.803000pt}%
\definecolor{currentstroke}{rgb}{0.000000,0.000000,0.000000}%
\pgfsetstrokecolor{currentstroke}%
\pgfsetdash{}{0pt}%
\pgfsys@defobject{currentmarker}{\pgfqpoint{0.000000in}{-0.048611in}}{\pgfqpoint{0.000000in}{0.000000in}}{%
\pgfpathmoveto{\pgfqpoint{0.000000in}{0.000000in}}%
\pgfpathlineto{\pgfqpoint{0.000000in}{-0.048611in}}%
\pgfusepath{stroke,fill}%
}%
\begin{pgfscope}%
\pgfsys@transformshift{1.947964in}{0.450309in}%
\pgfsys@useobject{currentmarker}{}%
\end{pgfscope}%
\end{pgfscope}%
\begin{pgfscope}%
\definecolor{textcolor}{rgb}{0.000000,0.000000,0.000000}%
\pgfsetstrokecolor{textcolor}%
\pgfsetfillcolor{textcolor}%
\pgftext[x=1.947964in,y=0.353087in,,top]{\color{textcolor}{\rmfamily\fontsize{8.000000}{9.600000}\selectfont\catcode`\^=\active\def^{\ifmmode\sp\else\^{}\fi}\catcode`\%=\active\def
\end{pgfscope}%
\begin{pgfscope}%
\definecolor{textcolor}{rgb}{0.000000,0.000000,0.000000}%
\pgfsetstrokecolor{textcolor}%
\pgfsetfillcolor{textcolor}%
\pgftext[x=1.225179in,y=0.198766in,,top]{\color{textcolor}{\rmfamily\fontsize{8.000000}{9.600000}\selectfont\catcode`\^=\active\def^{\ifmmode\sp\else\^{}\fi}\catcode`\%=\active\def
\end{pgfscope}%
\begin{pgfscope}%
\pgfsetbuttcap%
\pgfsetroundjoin%
\definecolor{currentfill}{rgb}{0.000000,0.000000,0.000000}%
\pgfsetfillcolor{currentfill}%
\pgfsetlinewidth{0.803000pt}%
\definecolor{currentstroke}{rgb}{0.000000,0.000000,0.000000}%
\pgfsetstrokecolor{currentstroke}%
\pgfsetdash{}{0pt}%
\pgfsys@defobject{currentmarker}{\pgfqpoint{-0.048611in}{0.000000in}}{\pgfqpoint{-0.000000in}{0.000000in}}{%
\pgfpathmoveto{\pgfqpoint{-0.000000in}{0.000000in}}%
\pgfpathlineto{\pgfqpoint{-0.048611in}{0.000000in}}%
\pgfusepath{stroke,fill}%
}%
\begin{pgfscope}%
\pgfsys@transformshift{0.502394in}{0.450309in}%
\pgfsys@useobject{currentmarker}{}%
\end{pgfscope}%
\end{pgfscope}%
\begin{pgfscope}%
\definecolor{textcolor}{rgb}{0.000000,0.000000,0.000000}%
\pgfsetstrokecolor{textcolor}%
\pgfsetfillcolor{textcolor}%
\pgftext[x=0.254321in, y=0.411729in, left, base]{\color{textcolor}{\rmfamily\fontsize{8.000000}{9.600000}\selectfont\catcode`\^=\active\def^{\ifmmode\sp\else\^{}\fi}\catcode`\%=\active\def
\end{pgfscope}%
\begin{pgfscope}%
\pgfsetbuttcap%
\pgfsetroundjoin%
\definecolor{currentfill}{rgb}{0.000000,0.000000,0.000000}%
\pgfsetfillcolor{currentfill}%
\pgfsetlinewidth{0.803000pt}%
\definecolor{currentstroke}{rgb}{0.000000,0.000000,0.000000}%
\pgfsetstrokecolor{currentstroke}%
\pgfsetdash{}{0pt}%
\pgfsys@defobject{currentmarker}{\pgfqpoint{-0.048611in}{0.000000in}}{\pgfqpoint{-0.000000in}{0.000000in}}{%
\pgfpathmoveto{\pgfqpoint{-0.000000in}{0.000000in}}%
\pgfpathlineto{\pgfqpoint{-0.048611in}{0.000000in}}%
\pgfusepath{stroke,fill}%
}%
\begin{pgfscope}%
\pgfsys@transformshift{0.502394in}{0.698407in}%
\pgfsys@useobject{currentmarker}{}%
\end{pgfscope}%
\end{pgfscope}%
\begin{pgfscope}%
\definecolor{textcolor}{rgb}{0.000000,0.000000,0.000000}%
\pgfsetstrokecolor{textcolor}%
\pgfsetfillcolor{textcolor}%
\pgftext[x=0.254321in, y=0.659827in, left, base]{\color{textcolor}{\rmfamily\fontsize{8.000000}{9.600000}\selectfont\catcode`\^=\active\def^{\ifmmode\sp\else\^{}\fi}\catcode`\%=\active\def
\end{pgfscope}%
\begin{pgfscope}%
\pgfsetbuttcap%
\pgfsetroundjoin%
\definecolor{currentfill}{rgb}{0.000000,0.000000,0.000000}%
\pgfsetfillcolor{currentfill}%
\pgfsetlinewidth{0.803000pt}%
\definecolor{currentstroke}{rgb}{0.000000,0.000000,0.000000}%
\pgfsetstrokecolor{currentstroke}%
\pgfsetdash{}{0pt}%
\pgfsys@defobject{currentmarker}{\pgfqpoint{-0.048611in}{0.000000in}}{\pgfqpoint{-0.000000in}{0.000000in}}{%
\pgfpathmoveto{\pgfqpoint{-0.000000in}{0.000000in}}%
\pgfpathlineto{\pgfqpoint{-0.048611in}{0.000000in}}%
\pgfusepath{stroke,fill}%
}%
\begin{pgfscope}%
\pgfsys@transformshift{0.502394in}{0.946505in}%
\pgfsys@useobject{currentmarker}{}%
\end{pgfscope}%
\end{pgfscope}%
\begin{pgfscope}%
\definecolor{textcolor}{rgb}{0.000000,0.000000,0.000000}%
\pgfsetstrokecolor{textcolor}%
\pgfsetfillcolor{textcolor}%
\pgftext[x=0.254321in, y=0.907925in, left, base]{\color{textcolor}{\rmfamily\fontsize{8.000000}{9.600000}\selectfont\catcode`\^=\active\def^{\ifmmode\sp\else\^{}\fi}\catcode`\%=\active\def
\end{pgfscope}%
\begin{pgfscope}%
\pgfsetbuttcap%
\pgfsetroundjoin%
\definecolor{currentfill}{rgb}{0.000000,0.000000,0.000000}%
\pgfsetfillcolor{currentfill}%
\pgfsetlinewidth{0.803000pt}%
\definecolor{currentstroke}{rgb}{0.000000,0.000000,0.000000}%
\pgfsetstrokecolor{currentstroke}%
\pgfsetdash{}{0pt}%
\pgfsys@defobject{currentmarker}{\pgfqpoint{-0.048611in}{0.000000in}}{\pgfqpoint{-0.000000in}{0.000000in}}{%
\pgfpathmoveto{\pgfqpoint{-0.000000in}{0.000000in}}%
\pgfpathlineto{\pgfqpoint{-0.048611in}{0.000000in}}%
\pgfusepath{stroke,fill}%
}%
\begin{pgfscope}%
\pgfsys@transformshift{0.502394in}{1.194603in}%
\pgfsys@useobject{currentmarker}{}%
\end{pgfscope}%
\end{pgfscope}%
\begin{pgfscope}%
\definecolor{textcolor}{rgb}{0.000000,0.000000,0.000000}%
\pgfsetstrokecolor{textcolor}%
\pgfsetfillcolor{textcolor}%
\pgftext[x=0.254321in, y=1.156023in, left, base]{\color{textcolor}{\rmfamily\fontsize{8.000000}{9.600000}\selectfont\catcode`\^=\active\def^{\ifmmode\sp\else\^{}\fi}\catcode`\%=\active\def
\end{pgfscope}%
\begin{pgfscope}%
\pgfsetbuttcap%
\pgfsetroundjoin%
\definecolor{currentfill}{rgb}{0.000000,0.000000,0.000000}%
\pgfsetfillcolor{currentfill}%
\pgfsetlinewidth{0.803000pt}%
\definecolor{currentstroke}{rgb}{0.000000,0.000000,0.000000}%
\pgfsetstrokecolor{currentstroke}%
\pgfsetdash{}{0pt}%
\pgfsys@defobject{currentmarker}{\pgfqpoint{-0.048611in}{0.000000in}}{\pgfqpoint{-0.000000in}{0.000000in}}{%
\pgfpathmoveto{\pgfqpoint{-0.000000in}{0.000000in}}%
\pgfpathlineto{\pgfqpoint{-0.048611in}{0.000000in}}%
\pgfusepath{stroke,fill}%
}%
\begin{pgfscope}%
\pgfsys@transformshift{0.502394in}{1.442701in}%
\pgfsys@useobject{currentmarker}{}%
\end{pgfscope}%
\end{pgfscope}%
\begin{pgfscope}%
\definecolor{textcolor}{rgb}{0.000000,0.000000,0.000000}%
\pgfsetstrokecolor{textcolor}%
\pgfsetfillcolor{textcolor}%
\pgftext[x=0.254321in, y=1.404121in, left, base]{\color{textcolor}{\rmfamily\fontsize{8.000000}{9.600000}\selectfont\catcode`\^=\active\def^{\ifmmode\sp\else\^{}\fi}\catcode`\%=\active\def
\end{pgfscope}%
\begin{pgfscope}%
\pgfsetbuttcap%
\pgfsetroundjoin%
\definecolor{currentfill}{rgb}{0.000000,0.000000,0.000000}%
\pgfsetfillcolor{currentfill}%
\pgfsetlinewidth{0.803000pt}%
\definecolor{currentstroke}{rgb}{0.000000,0.000000,0.000000}%
\pgfsetstrokecolor{currentstroke}%
\pgfsetdash{}{0pt}%
\pgfsys@defobject{currentmarker}{\pgfqpoint{-0.048611in}{0.000000in}}{\pgfqpoint{-0.000000in}{0.000000in}}{%
\pgfpathmoveto{\pgfqpoint{-0.000000in}{0.000000in}}%
\pgfpathlineto{\pgfqpoint{-0.048611in}{0.000000in}}%
\pgfusepath{stroke,fill}%
}%
\begin{pgfscope}%
\pgfsys@transformshift{0.502394in}{1.690800in}%
\pgfsys@useobject{currentmarker}{}%
\end{pgfscope}%
\end{pgfscope}%
\begin{pgfscope}%
\definecolor{textcolor}{rgb}{0.000000,0.000000,0.000000}%
\pgfsetstrokecolor{textcolor}%
\pgfsetfillcolor{textcolor}%
\pgftext[x=0.254321in, y=1.652219in, left, base]{\color{textcolor}{\rmfamily\fontsize{8.000000}{9.600000}\selectfont\catcode`\^=\active\def^{\ifmmode\sp\else\^{}\fi}\catcode`\%=\active\def
\end{pgfscope}%
\begin{pgfscope}%
\pgfsetbuttcap%
\pgfsetroundjoin%
\definecolor{currentfill}{rgb}{0.000000,0.000000,0.000000}%
\pgfsetfillcolor{currentfill}%
\pgfsetlinewidth{0.803000pt}%
\definecolor{currentstroke}{rgb}{0.000000,0.000000,0.000000}%
\pgfsetstrokecolor{currentstroke}%
\pgfsetdash{}{0pt}%
\pgfsys@defobject{currentmarker}{\pgfqpoint{-0.048611in}{0.000000in}}{\pgfqpoint{-0.000000in}{0.000000in}}{%
\pgfpathmoveto{\pgfqpoint{-0.000000in}{0.000000in}}%
\pgfpathlineto{\pgfqpoint{-0.048611in}{0.000000in}}%
\pgfusepath{stroke,fill}%
}%
\begin{pgfscope}%
\pgfsys@transformshift{0.502394in}{1.938898in}%
\pgfsys@useobject{currentmarker}{}%
\end{pgfscope}%
\end{pgfscope}%
\begin{pgfscope}%
\definecolor{textcolor}{rgb}{0.000000,0.000000,0.000000}%
\pgfsetstrokecolor{textcolor}%
\pgfsetfillcolor{textcolor}%
\pgftext[x=0.254321in, y=1.900317in, left, base]{\color{textcolor}{\rmfamily\fontsize{8.000000}{9.600000}\selectfont\catcode`\^=\active\def^{\ifmmode\sp\else\^{}\fi}\catcode`\%=\active\def
\end{pgfscope}%
\begin{pgfscope}%
\definecolor{textcolor}{rgb}{0.000000,0.000000,0.000000}%
\pgfsetstrokecolor{textcolor}%
\pgfsetfillcolor{textcolor}%
\pgftext[x=0.198766in,y=1.194603in,,bottom,rotate=90.000000]{\color{textcolor}{\rmfamily\fontsize{8.000000}{9.600000}\selectfont\catcode`\^=\active\def^{\ifmmode\sp\else\^{}\fi}\catcode`\%=\active\def
\end{pgfscope}%
\begin{pgfscope}%
\pgfpathrectangle{\pgfqpoint{0.502394in}{0.450309in}}{\pgfqpoint{1.445569in}{1.488589in}}%
\pgfusepath{clip}%
\pgfsetbuttcap%
\pgfsetroundjoin%
\pgfsetlinewidth{1.003750pt}%
\definecolor{currentstroke}{rgb}{0.000000,0.000000,0.000000}%
\pgfsetstrokecolor{currentstroke}%
\pgfsetdash{{3.700000pt}{1.600000pt}}{0.000000pt}%
\pgfpathmoveto{\pgfqpoint{0.502394in}{0.450309in}}%
\pgfpathlineto{\pgfqpoint{1.947964in}{1.938898in}}%
\pgfusepath{stroke}%
\end{pgfscope}%
\begin{pgfscope}%
\pgfsetrectcap%
\pgfsetmiterjoin%
\pgfsetlinewidth{0.803000pt}%
\definecolor{currentstroke}{rgb}{0.000000,0.000000,0.000000}%
\pgfsetstrokecolor{currentstroke}%
\pgfsetdash{}{0pt}%
\pgfpathmoveto{\pgfqpoint{0.502394in}{0.450309in}}%
\pgfpathlineto{\pgfqpoint{0.502394in}{1.938898in}}%
\pgfusepath{stroke}%
\end{pgfscope}%
\begin{pgfscope}%
\pgfsetrectcap%
\pgfsetmiterjoin%
\pgfsetlinewidth{0.803000pt}%
\definecolor{currentstroke}{rgb}{0.000000,0.000000,0.000000}%
\pgfsetstrokecolor{currentstroke}%
\pgfsetdash{}{0pt}%
\pgfpathmoveto{\pgfqpoint{1.947964in}{0.450309in}}%
\pgfpathlineto{\pgfqpoint{1.947964in}{1.938898in}}%
\pgfusepath{stroke}%
\end{pgfscope}%
\begin{pgfscope}%
\pgfsetrectcap%
\pgfsetmiterjoin%
\pgfsetlinewidth{0.803000pt}%
\definecolor{currentstroke}{rgb}{0.000000,0.000000,0.000000}%
\pgfsetstrokecolor{currentstroke}%
\pgfsetdash{}{0pt}%
\pgfpathmoveto{\pgfqpoint{0.502394in}{0.450309in}}%
\pgfpathlineto{\pgfqpoint{1.947964in}{0.450309in}}%
\pgfusepath{stroke}%
\end{pgfscope}%
\begin{pgfscope}%
\pgfsetrectcap%
\pgfsetmiterjoin%
\pgfsetlinewidth{0.803000pt}%
\definecolor{currentstroke}{rgb}{0.000000,0.000000,0.000000}%
\pgfsetstrokecolor{currentstroke}%
\pgfsetdash{}{0pt}%
\pgfpathmoveto{\pgfqpoint{0.502394in}{1.938898in}}%
\pgfpathlineto{\pgfqpoint{1.947964in}{1.938898in}}%
\pgfusepath{stroke}%
\end{pgfscope}%
\begin{pgfscope}%
\pgfsetbuttcap%
\pgfsetmiterjoin%
\definecolor{currentfill}{rgb}{1.000000,1.000000,1.000000}%
\pgfsetfillcolor{currentfill}%
\pgfsetfillopacity{0.800000}%
\pgfsetlinewidth{1.003750pt}%
\definecolor{currentstroke}{rgb}{0.800000,0.800000,0.800000}%
\pgfsetstrokecolor{currentstroke}%
\pgfsetstrokeopacity{0.800000}%
\pgfsetdash{}{0pt}%
\pgfpathmoveto{\pgfqpoint{1.232605in}{0.505864in}}%
\pgfpathlineto{\pgfqpoint{1.870186in}{0.505864in}}%
\pgfpathquadraticcurveto{\pgfqpoint{1.892408in}{0.505864in}}{\pgfqpoint{1.892408in}{0.528087in}}%
\pgfpathlineto{\pgfqpoint{1.892408in}{0.671914in}}%
\pgfpathquadraticcurveto{\pgfqpoint{1.892408in}{0.694136in}}{\pgfqpoint{1.870186in}{0.694136in}}%
\pgfpathlineto{\pgfqpoint{1.232605in}{0.694136in}}%
\pgfpathquadraticcurveto{\pgfqpoint{1.210382in}{0.694136in}}{\pgfqpoint{1.210382in}{0.671914in}}%
\pgfpathlineto{\pgfqpoint{1.210382in}{0.528087in}}%
\pgfpathquadraticcurveto{\pgfqpoint{1.210382in}{0.505864in}}{\pgfqpoint{1.232605in}{0.505864in}}%
\pgfpathlineto{\pgfqpoint{1.232605in}{0.505864in}}%
\pgfpathclose%
\pgfusepath{stroke,fill}%
\end{pgfscope}%
\begin{pgfscope}%
\pgfsetbuttcap%
\pgfsetroundjoin%
\pgfsetlinewidth{1.003750pt}%
\definecolor{currentstroke}{rgb}{0.000000,0.000000,0.000000}%
\pgfsetstrokecolor{currentstroke}%
\pgfsetdash{{3.700000pt}{1.600000pt}}{0.000000pt}%
\pgfpathmoveto{\pgfqpoint{1.254827in}{0.610803in}}%
\pgfpathlineto{\pgfqpoint{1.365938in}{0.610803in}}%
\pgfpathlineto{\pgfqpoint{1.477049in}{0.610803in}}%
\pgfusepath{stroke}%
\end{pgfscope}%
\begin{pgfscope}%
\definecolor{textcolor}{rgb}{0.000000,0.000000,0.000000}%
\pgfsetstrokecolor{textcolor}%
\pgfsetfillcolor{textcolor}%
\pgftext[x=1.565938in,y=0.571914in,left,base]{\color{textcolor}{\rmfamily\fontsize{8.000000}{9.600000}\selectfont\catcode`\^=\active\def^{\ifmmode\sp\else\^{}\fi}\catcode`\%=\active\def
\end{pgfscope}%
\end{pgfpicture}%
\makeatother%
\endgroup%

%% file: figures/optimized_parallel_n4.pgf
\begingroup%
\makeatletter%
\begin{pgfpicture}%
\pgfpathrectangle{\pgfpointorigin}{\pgfqpoint{2.077478in}{2.077478in}}%
\pgfusepath{use as bounding box, clip}%
\begin{pgfscope}%
\pgfsetbuttcap%
\pgfsetmiterjoin%
\definecolor{currentfill}{rgb}{1.000000,1.000000,1.000000}%
\pgfsetfillcolor{currentfill}%
\pgfsetlinewidth{0.000000pt}%
\definecolor{currentstroke}{rgb}{1.000000,1.000000,1.000000}%
\pgfsetstrokecolor{currentstroke}%
\pgfsetdash{}{0pt}%
\pgfpathmoveto{\pgfqpoint{0.000000in}{0.000000in}}%
\pgfpathlineto{\pgfqpoint{2.077478in}{0.000000in}}%
\pgfpathlineto{\pgfqpoint{2.077478in}{2.077478in}}%
\pgfpathlineto{\pgfqpoint{0.000000in}{2.077478in}}%
\pgfpathlineto{\pgfqpoint{0.000000in}{0.000000in}}%
\pgfpathclose%
\pgfusepath{fill}%
\end{pgfscope}%
\begin{pgfscope}%
\pgfsetbuttcap%
\pgfsetmiterjoin%
\definecolor{currentfill}{rgb}{1.000000,1.000000,1.000000}%
\pgfsetfillcolor{currentfill}%
\pgfsetlinewidth{0.000000pt}%
\definecolor{currentstroke}{rgb}{0.000000,0.000000,0.000000}%
\pgfsetstrokecolor{currentstroke}%
\pgfsetstrokeopacity{0.000000}%
\pgfsetdash{}{0pt}%
\pgfpathmoveto{\pgfqpoint{0.502394in}{0.450309in}}%
\pgfpathlineto{\pgfqpoint{1.947964in}{0.450309in}}%
\pgfpathlineto{\pgfqpoint{1.947964in}{1.938898in}}%
\pgfpathlineto{\pgfqpoint{0.502394in}{1.938898in}}%
\pgfpathlineto{\pgfqpoint{0.502394in}{0.450309in}}%
\pgfpathclose%
\pgfusepath{fill}%
\end{pgfscope}%
\begin{pgfscope}%
\pgfpathrectangle{\pgfqpoint{0.502394in}{0.450309in}}{\pgfqpoint{1.445569in}{1.488589in}}%
\pgfusepath{clip}%
\pgfsys@transformshift{0.502394in}{0.450309in}%
\pgftext[left,bottom]{\includegraphics[interpolate=true,width=1.446667in,height=1.490000in]{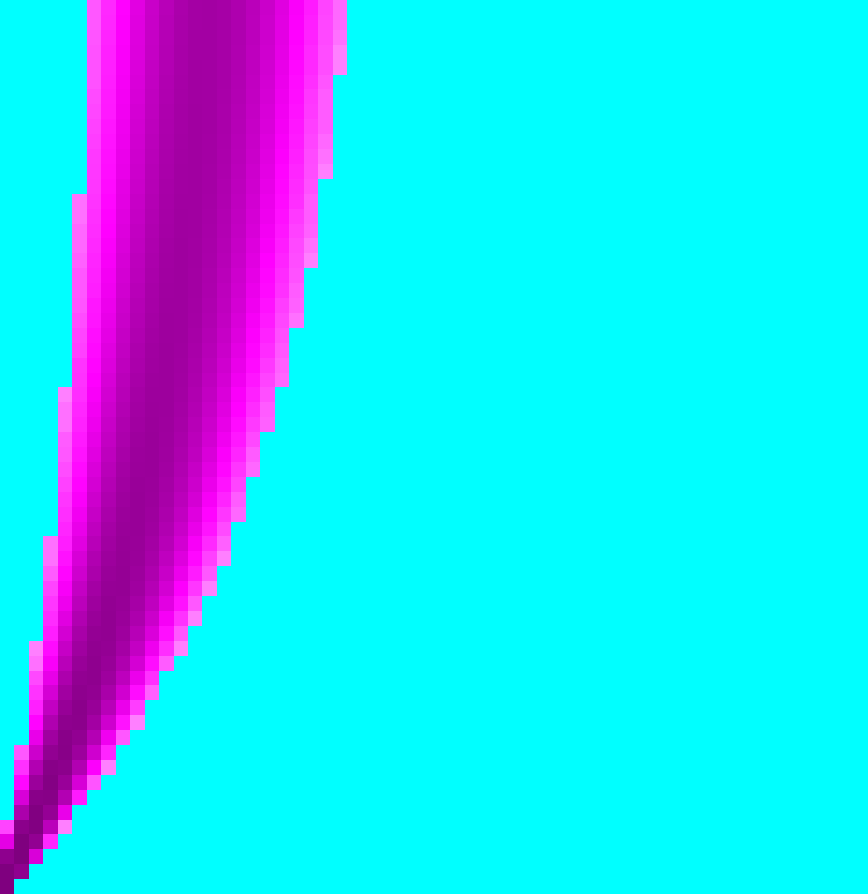}}%
\end{pgfscope}%
\begin{pgfscope}%
\pgfsetbuttcap%
\pgfsetroundjoin%
\definecolor{currentfill}{rgb}{0.000000,0.000000,0.000000}%
\pgfsetfillcolor{currentfill}%
\pgfsetlinewidth{0.803000pt}%
\definecolor{currentstroke}{rgb}{0.000000,0.000000,0.000000}%
\pgfsetstrokecolor{currentstroke}%
\pgfsetdash{}{0pt}%
\pgfsys@defobject{currentmarker}{\pgfqpoint{0.000000in}{-0.048611in}}{\pgfqpoint{0.000000in}{0.000000in}}{%
\pgfpathmoveto{\pgfqpoint{0.000000in}{0.000000in}}%
\pgfpathlineto{\pgfqpoint{0.000000in}{-0.048611in}}%
\pgfusepath{stroke,fill}%
}%
\begin{pgfscope}%
\pgfsys@transformshift{0.502394in}{0.450309in}%
\pgfsys@useobject{currentmarker}{}%
\end{pgfscope}%
\end{pgfscope}%
\begin{pgfscope}%
\definecolor{textcolor}{rgb}{0.000000,0.000000,0.000000}%
\pgfsetstrokecolor{textcolor}%
\pgfsetfillcolor{textcolor}%
\pgftext[x=0.502394in,y=0.353087in,,top]{\color{textcolor}{\rmfamily\fontsize{8.000000}{9.600000}\selectfont\catcode`\^=\active\def^{\ifmmode\sp\else\^{}\fi}\catcode`\%=\active\def
\end{pgfscope}%
\begin{pgfscope}%
\pgfsetbuttcap%
\pgfsetroundjoin%
\definecolor{currentfill}{rgb}{0.000000,0.000000,0.000000}%
\pgfsetfillcolor{currentfill}%
\pgfsetlinewidth{0.803000pt}%
\definecolor{currentstroke}{rgb}{0.000000,0.000000,0.000000}%
\pgfsetstrokecolor{currentstroke}%
\pgfsetdash{}{0pt}%
\pgfsys@defobject{currentmarker}{\pgfqpoint{0.000000in}{-0.048611in}}{\pgfqpoint{0.000000in}{0.000000in}}{%
\pgfpathmoveto{\pgfqpoint{0.000000in}{0.000000in}}%
\pgfpathlineto{\pgfqpoint{0.000000in}{-0.048611in}}%
\pgfusepath{stroke,fill}%
}%
\begin{pgfscope}%
\pgfsys@transformshift{0.984251in}{0.450309in}%
\pgfsys@useobject{currentmarker}{}%
\end{pgfscope}%
\end{pgfscope}%
\begin{pgfscope}%
\definecolor{textcolor}{rgb}{0.000000,0.000000,0.000000}%
\pgfsetstrokecolor{textcolor}%
\pgfsetfillcolor{textcolor}%
\pgftext[x=0.984251in,y=0.353087in,,top]{\color{textcolor}{\rmfamily\fontsize{8.000000}{9.600000}\selectfont\catcode`\^=\active\def^{\ifmmode\sp\else\^{}\fi}\catcode`\%=\active\def
\end{pgfscope}%
\begin{pgfscope}%
\pgfsetbuttcap%
\pgfsetroundjoin%
\definecolor{currentfill}{rgb}{0.000000,0.000000,0.000000}%
\pgfsetfillcolor{currentfill}%
\pgfsetlinewidth{0.803000pt}%
\definecolor{currentstroke}{rgb}{0.000000,0.000000,0.000000}%
\pgfsetstrokecolor{currentstroke}%
\pgfsetdash{}{0pt}%
\pgfsys@defobject{currentmarker}{\pgfqpoint{0.000000in}{-0.048611in}}{\pgfqpoint{0.000000in}{0.000000in}}{%
\pgfpathmoveto{\pgfqpoint{0.000000in}{0.000000in}}%
\pgfpathlineto{\pgfqpoint{0.000000in}{-0.048611in}}%
\pgfusepath{stroke,fill}%
}%
\begin{pgfscope}%
\pgfsys@transformshift{1.466107in}{0.450309in}%
\pgfsys@useobject{currentmarker}{}%
\end{pgfscope}%
\end{pgfscope}%
\begin{pgfscope}%
\definecolor{textcolor}{rgb}{0.000000,0.000000,0.000000}%
\pgfsetstrokecolor{textcolor}%
\pgfsetfillcolor{textcolor}%
\pgftext[x=1.466107in,y=0.353087in,,top]{\color{textcolor}{\rmfamily\fontsize{8.000000}{9.600000}\selectfont\catcode`\^=\active\def^{\ifmmode\sp\else\^{}\fi}\catcode`\%=\active\def
\end{pgfscope}%
\begin{pgfscope}%
\pgfsetbuttcap%
\pgfsetroundjoin%
\definecolor{currentfill}{rgb}{0.000000,0.000000,0.000000}%
\pgfsetfillcolor{currentfill}%
\pgfsetlinewidth{0.803000pt}%
\definecolor{currentstroke}{rgb}{0.000000,0.000000,0.000000}%
\pgfsetstrokecolor{currentstroke}%
\pgfsetdash{}{0pt}%
\pgfsys@defobject{currentmarker}{\pgfqpoint{0.000000in}{-0.048611in}}{\pgfqpoint{0.000000in}{0.000000in}}{%
\pgfpathmoveto{\pgfqpoint{0.000000in}{0.000000in}}%
\pgfpathlineto{\pgfqpoint{0.000000in}{-0.048611in}}%
\pgfusepath{stroke,fill}%
}%
\begin{pgfscope}%
\pgfsys@transformshift{1.947964in}{0.450309in}%
\pgfsys@useobject{currentmarker}{}%
\end{pgfscope}%
\end{pgfscope}%
\begin{pgfscope}%
\definecolor{textcolor}{rgb}{0.000000,0.000000,0.000000}%
\pgfsetstrokecolor{textcolor}%
\pgfsetfillcolor{textcolor}%
\pgftext[x=1.947964in,y=0.353087in,,top]{\color{textcolor}{\rmfamily\fontsize{8.000000}{9.600000}\selectfont\catcode`\^=\active\def^{\ifmmode\sp\else\^{}\fi}\catcode`\%=\active\def
\end{pgfscope}%
\begin{pgfscope}%
\definecolor{textcolor}{rgb}{0.000000,0.000000,0.000000}%
\pgfsetstrokecolor{textcolor}%
\pgfsetfillcolor{textcolor}%
\pgftext[x=1.225179in,y=0.198766in,,top]{\color{textcolor}{\rmfamily\fontsize{8.000000}{9.600000}\selectfont\catcode`\^=\active\def^{\ifmmode\sp\else\^{}\fi}\catcode`\%=\active\def
\end{pgfscope}%
\begin{pgfscope}%
\pgfsetbuttcap%
\pgfsetroundjoin%
\definecolor{currentfill}{rgb}{0.000000,0.000000,0.000000}%
\pgfsetfillcolor{currentfill}%
\pgfsetlinewidth{0.803000pt}%
\definecolor{currentstroke}{rgb}{0.000000,0.000000,0.000000}%
\pgfsetstrokecolor{currentstroke}%
\pgfsetdash{}{0pt}%
\pgfsys@defobject{currentmarker}{\pgfqpoint{-0.048611in}{0.000000in}}{\pgfqpoint{-0.000000in}{0.000000in}}{%
\pgfpathmoveto{\pgfqpoint{-0.000000in}{0.000000in}}%
\pgfpathlineto{\pgfqpoint{-0.048611in}{0.000000in}}%
\pgfusepath{stroke,fill}%
}%
\begin{pgfscope}%
\pgfsys@transformshift{0.502394in}{0.450309in}%
\pgfsys@useobject{currentmarker}{}%
\end{pgfscope}%
\end{pgfscope}%
\begin{pgfscope}%
\definecolor{textcolor}{rgb}{0.000000,0.000000,0.000000}%
\pgfsetstrokecolor{textcolor}%
\pgfsetfillcolor{textcolor}%
\pgftext[x=0.254321in, y=0.411729in, left, base]{\color{textcolor}{\rmfamily\fontsize{8.000000}{9.600000}\selectfont\catcode`\^=\active\def^{\ifmmode\sp\else\^{}\fi}\catcode`\%=\active\def
\end{pgfscope}%
\begin{pgfscope}%
\pgfsetbuttcap%
\pgfsetroundjoin%
\definecolor{currentfill}{rgb}{0.000000,0.000000,0.000000}%
\pgfsetfillcolor{currentfill}%
\pgfsetlinewidth{0.803000pt}%
\definecolor{currentstroke}{rgb}{0.000000,0.000000,0.000000}%
\pgfsetstrokecolor{currentstroke}%
\pgfsetdash{}{0pt}%
\pgfsys@defobject{currentmarker}{\pgfqpoint{-0.048611in}{0.000000in}}{\pgfqpoint{-0.000000in}{0.000000in}}{%
\pgfpathmoveto{\pgfqpoint{-0.000000in}{0.000000in}}%
\pgfpathlineto{\pgfqpoint{-0.048611in}{0.000000in}}%
\pgfusepath{stroke,fill}%
}%
\begin{pgfscope}%
\pgfsys@transformshift{0.502394in}{0.698407in}%
\pgfsys@useobject{currentmarker}{}%
\end{pgfscope}%
\end{pgfscope}%
\begin{pgfscope}%
\definecolor{textcolor}{rgb}{0.000000,0.000000,0.000000}%
\pgfsetstrokecolor{textcolor}%
\pgfsetfillcolor{textcolor}%
\pgftext[x=0.254321in, y=0.659827in, left, base]{\color{textcolor}{\rmfamily\fontsize{8.000000}{9.600000}\selectfont\catcode`\^=\active\def^{\ifmmode\sp\else\^{}\fi}\catcode`\%=\active\def
\end{pgfscope}%
\begin{pgfscope}%
\pgfsetbuttcap%
\pgfsetroundjoin%
\definecolor{currentfill}{rgb}{0.000000,0.000000,0.000000}%
\pgfsetfillcolor{currentfill}%
\pgfsetlinewidth{0.803000pt}%
\definecolor{currentstroke}{rgb}{0.000000,0.000000,0.000000}%
\pgfsetstrokecolor{currentstroke}%
\pgfsetdash{}{0pt}%
\pgfsys@defobject{currentmarker}{\pgfqpoint{-0.048611in}{0.000000in}}{\pgfqpoint{-0.000000in}{0.000000in}}{%
\pgfpathmoveto{\pgfqpoint{-0.000000in}{0.000000in}}%
\pgfpathlineto{\pgfqpoint{-0.048611in}{0.000000in}}%
\pgfusepath{stroke,fill}%
}%
\begin{pgfscope}%
\pgfsys@transformshift{0.502394in}{0.946505in}%
\pgfsys@useobject{currentmarker}{}%
\end{pgfscope}%
\end{pgfscope}%
\begin{pgfscope}%
\definecolor{textcolor}{rgb}{0.000000,0.000000,0.000000}%
\pgfsetstrokecolor{textcolor}%
\pgfsetfillcolor{textcolor}%
\pgftext[x=0.254321in, y=0.907925in, left, base]{\color{textcolor}{\rmfamily\fontsize{8.000000}{9.600000}\selectfont\catcode`\^=\active\def^{\ifmmode\sp\else\^{}\fi}\catcode`\%=\active\def
\end{pgfscope}%
\begin{pgfscope}%
\pgfsetbuttcap%
\pgfsetroundjoin%
\definecolor{currentfill}{rgb}{0.000000,0.000000,0.000000}%
\pgfsetfillcolor{currentfill}%
\pgfsetlinewidth{0.803000pt}%
\definecolor{currentstroke}{rgb}{0.000000,0.000000,0.000000}%
\pgfsetstrokecolor{currentstroke}%
\pgfsetdash{}{0pt}%
\pgfsys@defobject{currentmarker}{\pgfqpoint{-0.048611in}{0.000000in}}{\pgfqpoint{-0.000000in}{0.000000in}}{%
\pgfpathmoveto{\pgfqpoint{-0.000000in}{0.000000in}}%
\pgfpathlineto{\pgfqpoint{-0.048611in}{0.000000in}}%
\pgfusepath{stroke,fill}%
}%
\begin{pgfscope}%
\pgfsys@transformshift{0.502394in}{1.194603in}%
\pgfsys@useobject{currentmarker}{}%
\end{pgfscope}%
\end{pgfscope}%
\begin{pgfscope}%
\definecolor{textcolor}{rgb}{0.000000,0.000000,0.000000}%
\pgfsetstrokecolor{textcolor}%
\pgfsetfillcolor{textcolor}%
\pgftext[x=0.254321in, y=1.156023in, left, base]{\color{textcolor}{\rmfamily\fontsize{8.000000}{9.600000}\selectfont\catcode`\^=\active\def^{\ifmmode\sp\else\^{}\fi}\catcode`\%=\active\def
\end{pgfscope}%
\begin{pgfscope}%
\pgfsetbuttcap%
\pgfsetroundjoin%
\definecolor{currentfill}{rgb}{0.000000,0.000000,0.000000}%
\pgfsetfillcolor{currentfill}%
\pgfsetlinewidth{0.803000pt}%
\definecolor{currentstroke}{rgb}{0.000000,0.000000,0.000000}%
\pgfsetstrokecolor{currentstroke}%
\pgfsetdash{}{0pt}%
\pgfsys@defobject{currentmarker}{\pgfqpoint{-0.048611in}{0.000000in}}{\pgfqpoint{-0.000000in}{0.000000in}}{%
\pgfpathmoveto{\pgfqpoint{-0.000000in}{0.000000in}}%
\pgfpathlineto{\pgfqpoint{-0.048611in}{0.000000in}}%
\pgfusepath{stroke,fill}%
}%
\begin{pgfscope}%
\pgfsys@transformshift{0.502394in}{1.442701in}%
\pgfsys@useobject{currentmarker}{}%
\end{pgfscope}%
\end{pgfscope}%
\begin{pgfscope}%
\definecolor{textcolor}{rgb}{0.000000,0.000000,0.000000}%
\pgfsetstrokecolor{textcolor}%
\pgfsetfillcolor{textcolor}%
\pgftext[x=0.254321in, y=1.404121in, left, base]{\color{textcolor}{\rmfamily\fontsize{8.000000}{9.600000}\selectfont\catcode`\^=\active\def^{\ifmmode\sp\else\^{}\fi}\catcode`\%=\active\def
\end{pgfscope}%
\begin{pgfscope}%
\pgfsetbuttcap%
\pgfsetroundjoin%
\definecolor{currentfill}{rgb}{0.000000,0.000000,0.000000}%
\pgfsetfillcolor{currentfill}%
\pgfsetlinewidth{0.803000pt}%
\definecolor{currentstroke}{rgb}{0.000000,0.000000,0.000000}%
\pgfsetstrokecolor{currentstroke}%
\pgfsetdash{}{0pt}%
\pgfsys@defobject{currentmarker}{\pgfqpoint{-0.048611in}{0.000000in}}{\pgfqpoint{-0.000000in}{0.000000in}}{%
\pgfpathmoveto{\pgfqpoint{-0.000000in}{0.000000in}}%
\pgfpathlineto{\pgfqpoint{-0.048611in}{0.000000in}}%
\pgfusepath{stroke,fill}%
}%
\begin{pgfscope}%
\pgfsys@transformshift{0.502394in}{1.690800in}%
\pgfsys@useobject{currentmarker}{}%
\end{pgfscope}%
\end{pgfscope}%
\begin{pgfscope}%
\definecolor{textcolor}{rgb}{0.000000,0.000000,0.000000}%
\pgfsetstrokecolor{textcolor}%
\pgfsetfillcolor{textcolor}%
\pgftext[x=0.254321in, y=1.652219in, left, base]{\color{textcolor}{\rmfamily\fontsize{8.000000}{9.600000}\selectfont\catcode`\^=\active\def^{\ifmmode\sp\else\^{}\fi}\catcode`\%=\active\def
\end{pgfscope}%
\begin{pgfscope}%
\pgfsetbuttcap%
\pgfsetroundjoin%
\definecolor{currentfill}{rgb}{0.000000,0.000000,0.000000}%
\pgfsetfillcolor{currentfill}%
\pgfsetlinewidth{0.803000pt}%
\definecolor{currentstroke}{rgb}{0.000000,0.000000,0.000000}%
\pgfsetstrokecolor{currentstroke}%
\pgfsetdash{}{0pt}%
\pgfsys@defobject{currentmarker}{\pgfqpoint{-0.048611in}{0.000000in}}{\pgfqpoint{-0.000000in}{0.000000in}}{%
\pgfpathmoveto{\pgfqpoint{-0.000000in}{0.000000in}}%
\pgfpathlineto{\pgfqpoint{-0.048611in}{0.000000in}}%
\pgfusepath{stroke,fill}%
}%
\begin{pgfscope}%
\pgfsys@transformshift{0.502394in}{1.938898in}%
\pgfsys@useobject{currentmarker}{}%
\end{pgfscope}%
\end{pgfscope}%
\begin{pgfscope}%
\definecolor{textcolor}{rgb}{0.000000,0.000000,0.000000}%
\pgfsetstrokecolor{textcolor}%
\pgfsetfillcolor{textcolor}%
\pgftext[x=0.254321in, y=1.900317in, left, base]{\color{textcolor}{\rmfamily\fontsize{8.000000}{9.600000}\selectfont\catcode`\^=\active\def^{\ifmmode\sp\else\^{}\fi}\catcode`\%=\active\def
\end{pgfscope}%
\begin{pgfscope}%
\definecolor{textcolor}{rgb}{0.000000,0.000000,0.000000}%
\pgfsetstrokecolor{textcolor}%
\pgfsetfillcolor{textcolor}%
\pgftext[x=0.198766in,y=1.194603in,,bottom,rotate=90.000000]{\color{textcolor}{\rmfamily\fontsize{8.000000}{9.600000}\selectfont\catcode`\^=\active\def^{\ifmmode\sp\else\^{}\fi}\catcode`\%=\active\def
\end{pgfscope}%
\begin{pgfscope}%
\pgfpathrectangle{\pgfqpoint{0.502394in}{0.450309in}}{\pgfqpoint{1.445569in}{1.488589in}}%
\pgfusepath{clip}%
\pgfsetbuttcap%
\pgfsetroundjoin%
\pgfsetlinewidth{1.003750pt}%
\definecolor{currentstroke}{rgb}{0.000000,0.000000,0.000000}%
\pgfsetstrokecolor{currentstroke}%
\pgfsetdash{{3.700000pt}{1.600000pt}}{0.000000pt}%
\pgfpathmoveto{\pgfqpoint{0.502394in}{0.450309in}}%
\pgfpathlineto{\pgfqpoint{1.947964in}{1.938898in}}%
\pgfusepath{stroke}%
\end{pgfscope}%
\begin{pgfscope}%
\pgfsetrectcap%
\pgfsetmiterjoin%
\pgfsetlinewidth{0.803000pt}%
\definecolor{currentstroke}{rgb}{0.000000,0.000000,0.000000}%
\pgfsetstrokecolor{currentstroke}%
\pgfsetdash{}{0pt}%
\pgfpathmoveto{\pgfqpoint{0.502394in}{0.450309in}}%
\pgfpathlineto{\pgfqpoint{0.502394in}{1.938898in}}%
\pgfusepath{stroke}%
\end{pgfscope}%
\begin{pgfscope}%
\pgfsetrectcap%
\pgfsetmiterjoin%
\pgfsetlinewidth{0.803000pt}%
\definecolor{currentstroke}{rgb}{0.000000,0.000000,0.000000}%
\pgfsetstrokecolor{currentstroke}%
\pgfsetdash{}{0pt}%
\pgfpathmoveto{\pgfqpoint{1.947964in}{0.450309in}}%
\pgfpathlineto{\pgfqpoint{1.947964in}{1.938898in}}%
\pgfusepath{stroke}%
\end{pgfscope}%
\begin{pgfscope}%
\pgfsetrectcap%
\pgfsetmiterjoin%
\pgfsetlinewidth{0.803000pt}%
\definecolor{currentstroke}{rgb}{0.000000,0.000000,0.000000}%
\pgfsetstrokecolor{currentstroke}%
\pgfsetdash{}{0pt}%
\pgfpathmoveto{\pgfqpoint{0.502394in}{0.450309in}}%
\pgfpathlineto{\pgfqpoint{1.947964in}{0.450309in}}%
\pgfusepath{stroke}%
\end{pgfscope}%
\begin{pgfscope}%
\pgfsetrectcap%
\pgfsetmiterjoin%
\pgfsetlinewidth{0.803000pt}%
\definecolor{currentstroke}{rgb}{0.000000,0.000000,0.000000}%
\pgfsetstrokecolor{currentstroke}%
\pgfsetdash{}{0pt}%
\pgfpathmoveto{\pgfqpoint{0.502394in}{1.938898in}}%
\pgfpathlineto{\pgfqpoint{1.947964in}{1.938898in}}%
\pgfusepath{stroke}%
\end{pgfscope}%
\begin{pgfscope}%
\pgfsetbuttcap%
\pgfsetmiterjoin%
\definecolor{currentfill}{rgb}{1.000000,1.000000,1.000000}%
\pgfsetfillcolor{currentfill}%
\pgfsetfillopacity{0.800000}%
\pgfsetlinewidth{1.003750pt}%
\definecolor{currentstroke}{rgb}{0.800000,0.800000,0.800000}%
\pgfsetstrokecolor{currentstroke}%
\pgfsetstrokeopacity{0.800000}%
\pgfsetdash{}{0pt}%
\pgfpathmoveto{\pgfqpoint{1.232605in}{0.505864in}}%
\pgfpathlineto{\pgfqpoint{1.870186in}{0.505864in}}%
\pgfpathquadraticcurveto{\pgfqpoint{1.892408in}{0.505864in}}{\pgfqpoint{1.892408in}{0.528087in}}%
\pgfpathlineto{\pgfqpoint{1.892408in}{0.671914in}}%
\pgfpathquadraticcurveto{\pgfqpoint{1.892408in}{0.694136in}}{\pgfqpoint{1.870186in}{0.694136in}}%
\pgfpathlineto{\pgfqpoint{1.232605in}{0.694136in}}%
\pgfpathquadraticcurveto{\pgfqpoint{1.210382in}{0.694136in}}{\pgfqpoint{1.210382in}{0.671914in}}%
\pgfpathlineto{\pgfqpoint{1.210382in}{0.528087in}}%
\pgfpathquadraticcurveto{\pgfqpoint{1.210382in}{0.505864in}}{\pgfqpoint{1.232605in}{0.505864in}}%
\pgfpathlineto{\pgfqpoint{1.232605in}{0.505864in}}%
\pgfpathclose%
\pgfusepath{stroke,fill}%
\end{pgfscope}%
\begin{pgfscope}%
\pgfsetbuttcap%
\pgfsetroundjoin%
\pgfsetlinewidth{1.003750pt}%
\definecolor{currentstroke}{rgb}{0.000000,0.000000,0.000000}%
\pgfsetstrokecolor{currentstroke}%
\pgfsetdash{{3.700000pt}{1.600000pt}}{0.000000pt}%
\pgfpathmoveto{\pgfqpoint{1.254827in}{0.610803in}}%
\pgfpathlineto{\pgfqpoint{1.365938in}{0.610803in}}%
\pgfpathlineto{\pgfqpoint{1.477049in}{0.610803in}}%
\pgfusepath{stroke}%
\end{pgfscope}%
\begin{pgfscope}%
\definecolor{textcolor}{rgb}{0.000000,0.000000,0.000000}%
\pgfsetstrokecolor{textcolor}%
\pgfsetfillcolor{textcolor}%
\pgftext[x=1.565938in,y=0.571914in,left,base]{\color{textcolor}{\rmfamily\fontsize{8.000000}{9.600000}\selectfont\catcode`\^=\active\def^{\ifmmode\sp\else\^{}\fi}\catcode`\%=\active\def
\end{pgfscope}%
\end{pgfpicture}%
\makeatother%
\endgroup%